\documentclass[a4paper]{amsart}
\usepackage{amsowariMT}
\pdfoutput=1
\usepackage{stmaryrd}
\usepackage{amsaddr_scauthor}
\usepackage[numbers]{natbib}
\usepackage{amsmath,mathrsfs,amssymb}		

\usepackage[OT1]{fontenc}
\usepackage[utf8]{inputenc}
\usepackage{mathptmx}
\DeclareSymbolFont{symbols}{OMS}{cmsy}{m}{n}
%%% Local Defs %%%

 \newcommand{\FC}{\mathcal{F}}

\newcommand{\MC}{\mathcal{M}} 
 \newcommand{\PC}{\mathcal{P}}
\newcommand{\QC}{\mathcal{Q}} 
 
\newcommand{\UC}{\mathcal{U}} 
 \newcommand{\XC}{\mathcal{X}}

 \newcommand{\FB}{\mathbb{F}}

 \newcommand{\NB}{\mathbb{N}}
 \newcommand{\PB}{\mathbb{P}}
 \newcommand{\RB}{\mathbb{R}}

 \newcommand{\Dh}{\hat{D}}

 \newcommand{\Ph}{\hat{P}}
\newcommand{\Qh}{\hat{Q}}

 \newcommand{\Xh}{\hat{X}}
 \newcommand{\Zh}{\hat{Z}}

%%%% Environments etc %%%%%

%%%% End of Local Defs %%%%

\begin{document}

\title[Admissible Strategies in Robust Utility Maximization]{On
  Admissible Strategies in Robust Utility Maximization }

%    Information for first author
\author[K. Owari]{ Keita Owari}
%    Address of record for the research reported here
\address{Graduate School of Economics, The University of Tokyo\\
  7-3-1 Hongo, Bunkyo-ku, Tokyo 113-0033, Japan }
%    Current address
\curraddr{}
\email{\href{mailto:keita.owari@gmail.com}{keita.owari@gmail.com}}
\thanks{\emph{Mathematics Subject Classification (2010).}
  60G44 $\cdot$ 60H05 $\cdot$ 91G10 $\cdot$ 91G80 \\
  \emph{JEL Classification.} C61 $\cdot$ G11 \\
  Forthcoming in \emph{Mathematics and Financial Economics}. % DOI:
    % 10.1007/s11579-012-0068-3
  }%

\keywords{Robust utility maximization $\cdot$ Convex duality $\cdot$
  Supermartingale $\cdot$ Martingale measures}

%\date{First draft: January 24, 2010; this version: Oct, 30, 2010 }

\dedicatory{}

% \keywords{}

\begin{abstract}
  The existence of optimal strategy in robust utility maximization is
  addressed when the utility function is finite on the entire real
  line. A delicate problem in this case is to find a ``good
  definition'' of admissible strategies to admit an optimizer. Under
  certain assumptions, especially a kind of time-consistency property
  of the set $\PC$ of probabilities which describes the model
  uncertainty, we show that an optimal strategy is obtained in the
  class of those whose wealths are supermartingales under all local
  martingale measures having a finite generalized entropy with one of
  $P\in\PC$.
 
\end{abstract}
\maketitle

\section{Introduction}
\label{sec:Intro}

This paper analyzes a \emph{qualitative aspect} of the problem of
robust utility maximization.  Given a utility function $U$ and a set
$\PC$ of probabilities which describes the model uncertainty, the
basic problem of this paper is to maximize the \emph{robust utility
  functional}
\begin{align*}
  X\mapsto \inf_{P\in\PC}E_P[U(X)]
\end{align*}
over all terminal wealths $x+\theta\cdot S_T=x+\int_0^T\theta dS$ of
admissible strategies $\theta$, where $S$ is an underlying
semimartingale.  When $U$ is finite only on the positive half-line,
the duality theory for this problem in the spirit of
\citep{kramkov_schachermayer1999,
  kramkov_schachermayer2003:_neces_and_suffic_condit_in} has been
studied in both \emph{quantitative} and \emph{qualitative} aspects
(e.g. \citep{schied_wu05}, \citep{schied07},
\citep{follmer_gundel06}).  In the case of utility taking finite
values for all $x\in\RB$,
\citep{owari11:_dualit_robus_utilit_maxim_unboun} shows the key
duality, while \citep{follmer_gundel06} and \citep{owari11:_AME} give
a \emph{partial result} on the existence of optimal strategy which we
shall complete in this paper. See also \citep{follmer_schied_weber09}
for more comprehensive reference and the background of the robust
utility maximization problem.

A key subtlety intrinsic to the case of utility on $\RB$ is \emph{ the
  ``good definition'' of admissible strategies $\theta$}, which will
constitute the central theme of this paper. In this case, a universal
and conceptually natural definition of admissibility is that
$\theta\cdot S$ is uniformly bounded from below by some constant,
which completely determines the quantitative nature of the
problem. This class, however, typically fails to admit an
optimizer. On the other hand, if $U$ is $-\infty$ on $\RB_-$, the only
natural (non-redundant) definition of admissibility is that the
stochastic integral $\theta\cdot S$ is bounded from below by $-x$, and
an optimal strategy is indeed obtained in this class under certain
mild assumptions (see \citep{schied_wu05,schied07}).

In the classical case (i.e., $\PC=\{P\}$, say), the question of the
good definition of admissibility is closely analyzed by
\citep{schachermayer2003:_super_martin_proper_of_optim_portf_proces}
following the observation by \citep{delbaen_et_al02} and
\citep{kabanov_stricker02} in the case of exponential
utility. \citep{schachermayer2003:_super_martin_proper_of_optim_portf_proces}
shows that a ``good definition'' which yields us an optimal strategy
is that $\theta\cdot S$ is a supermartingale under all local
martingale measures $Q$ which has a ``finite entropy'' with the
physical probability $P$. We denote the class of such $\theta$ by
$\Theta_V(P)$ (see Section~\ref{sec:Formulation} for precise
definitions including the meaning of ``finite entropy'').  Note that
this class contains the usual admissible class, and the
supermartingale property is consistent to the ``No-Arbitrage
philosophy''. Thus $\Theta_V(P)$ is acceptably natural choice
\emph{when a single physical probability is specified}.

In the general robust case with $\PC$ containing (infinitely) many
elements, \citep{follmer_gundel06} (see also \citep{owari11:_AME} for
a slight generalization) provides a partial analogue of the above
result which states that, under certain stronger assumptions, an
optimal strategy is obtained in the class of $\theta$ with
$\theta\cdot S$ being a supermartingale under all local martingale
measures $Q$ having a finite entropy w.r.t. a certain element
$\Ph\in\PC$ called a \emph{least favorable measure}, i.e., in the
class $\Theta_V(\Ph)$. Here a dissatisfaction comes of course from the
dependence of admissibility on $\Ph$. In \emph{philosophy}, $\PC$ is
the set of \emph{candidates} of real world models, and we do not know
which one is true. Thus an ``admissible strategy'' should be
\emph{universally} admissible for all candidates $P\in\PC$. Also, the
least favorable probability $\Ph$ is a part of solution to the dual
problem of robust utility maximization, hence the class
$\Theta_V(\Ph)$ is not \emph{a priori} available.

In this view, a seemingly natural admissible class is
$\bigcap_{P\in\PC}\Theta_V(P)$ which is universal and contains all
$\theta$ whose stochastic integrals are bounded below. Thus our
central question in this paper is:
\begin{question}
  \label{Q:KeyQuestion}
  Does the class $\bigcap_{P\in\PC}\Theta_V(P)$ admit an optimal
  strategy?
\end{question}

The main result (Theorem~\ref{thm:MainThm1}) states that this is
indeed the case if (in addition to standard assumptions) the set $\PC$
of candidate models has a \emph{time-consistency} property. We proceed
as follows. The first step is to construct a so-called ``optimal
claim'' for the abstract version of robust utility maximization, from
which a candidate of optimal strategy $\hat\theta$ is derived through
a predictable representation argument. This part is mostly standard
excepting some technicality, but we give a slightly better description
of optimal claim. Note that the additional time-consistency assumption
is not required at this stage.  The crucial step is to verify the
supermartingale property of $\hat\theta\cdot S$ under all local
martingale measures $Q$ which has a finite entropy with \emph{some}
$P\in\PC$ \emph{but its entropy with $\Ph$ is infinite}. We shall do
this by a (slight surprisingly) simple trick.

\section{Formulation}
\label{sec:Formulation}

We fix a complete probability space $(\Omega,\FC,\PB)$ as well as a
filtration $\FB=(\FC_t)_{t\in[0,T]}$ satisfying \emph{the usual
  conditions}, where $T\in(0,\infty)$ is a fixed time horizon. Though
many probabilities on $(\Omega,\FC)$ will appear in the sequel, the
probability $\PB$ plays the role of reference probability, i.e., every
probabilistic notion is defined under $\PB$ unless other probability
is explicitly specified as $E_P[\cdot]$, $L^1(P)$ etc. In particular,
the underlying asset prices $S$ is a $d$-dimensional $\PB$-càdlàg
semimartingale, and we assume:
\begin{equation}
  \label{eq:LocBDD}\tag{\text{A1}}
  S\text{ is $\PB$-locally bounded}.
\end{equation}

Let $\PC$ be a set of probabilities $P\ll \PB$, which we can (and do)
embed into $L^1$ via the mapping $P\mapsto dP/d\PB$. In this sense, we
assume:
\begin{equation}
  \label{eq:PCompact} \tag{\text{A2}}
  \PC\text{ is convex and $\sigma(L^1,L^\infty)$-compact}.
\end{equation}

We work with a utility function $U:\RB\rightarrow\RB$ which we assume
\begin{equation}
  \label{eq:AsInada}\tag{\text{A3}}
  U\text{ is differentiable, strictly concave on $\RB$, and }U'(-\infty)=\infty, \, U'(\infty)=0,
\end{equation}
and satisfies the condition of \emph{reasonable asymptotic
  elasticity}:
\begin{equation}
  \label{eq:RAE}\tag{\text{A4}}
  \liminf_{x\rightarrow-\infty}\frac{xU'(x)}{U(x)}>1\text{ and }\limsup_{x\rightarrow \infty}\frac{xU'(x)}{U(x)}<1.
\end{equation}
The conjugate of utility function $U$ is denoted by $V$, i.e.,
\begin{equation}
  \label{eq:Conj1}
  V(y):=\sup_{x\in\RB}(U(x)-xy),\quad y \in\RB.
\end{equation}
The assumptions (\ref{eq:AsInada}) and (\ref{eq:RAE}) guarantee that
$V$ is a ``nice'' convex function (see \citep{frittelli_rosazza04},
\citep{owari2011:APFM} for details). Using this function, we introduce
a generalized entropy:
\begin{equation}
  \label{eq:IntroDiv1}
  V(\nu|P):=
  \begin{cases}
    E_P[V( d\nu/dP)]&\text{ if }\nu\ll P,\\
    +\infty&\text{ otherwise}
  \end{cases}
\end{equation}
for any positive finite measure $\nu\ll\PB$ and $P\in\PC$. When
$U(x)=1 -e^{-x}$ (exponential utility) and $Q$ is a probability with
$Q\ll P$, we have $V(Q|P)=E_Q[\log(dQ/dP)]$, i.e., the relative
entropy. Abusing the terminology, we still call the map
$V(\cdot|\cdot)$ the \emph{generalized entropy } associated to $V$. We
define also the \emph{robust generalized entropy} by
\begin{equation}
  \label{eq:RobEntropy}
  V(Q|\PC):=\inf_{P\in\PC}V(Q|P)<\infty.
\end{equation}

Let $\MC_{loc}$ be the set of all local martingale measures for $S$,
i.e., probabilities $Q\ll\PB$ under which $S$ is a local
martingale. We then set
\begin{align}
  \label{eq:MV1}
  \MC_V&:=\{Q\in\MC_{loc}:\, V(Q|\PC)<\infty\}.
\end{align}
Generically, for any set $\QC$ of probabilities $Q\ll\PB$, we denote
by $\QC^e$ the set of $Q\in\QC$ with $Q\sim\PB$.  We assume the
existence of \emph{equivalent } local martingale measure with finite
entropy in the following sense:
\begin{equation}
  \label{eq:ELMMV}\tag{\text{A5}}
  \MC_V^e:=\{Q\in\MC_V:\,Q\sim\PB\}\neq \emptyset.
\end{equation}
In particular, this implies the existence of
$(Q,P)\in\MC_V^e\times\PC$ such that $Q\sim P\sim \PB$ and
$V(Q|P)<\infty$. See \citep{owari11:_AME} for detail and other
consequences of these assumptions.

Let $L(S)$ be the totality of all $(S,\PB)$-integrable $d$-dimensional
predictable processes, $L_0(S):=\{\theta\in L(S): \theta_0=0\}$, and
we denote by $\theta\cdot S$ the stochastic integral of $\theta\in
L(S)$ w.r.t. $S$. See e.g., \citep{jacod80} or
\citep{jacod_shiryaev03} for more information. When the utility
function is finite on the entire real line, a conceptually natural
choice of $\Theta$ is
\begin{equation}
  \label{eq:ThetaBB}
  \Theta_{bb}:=\{\theta\in L_0(S):\,
  \theta\cdot S_0=0,\,\theta\cdot S\text{ is bounded from below}\}.
\end{equation}
Then the value function of the robust utility maximization problem is
given by
\begin{equation}
  \label{eq:ProbBasic1}
  u(x):=\sup_{\theta\in\Theta_{bb}}\inf_{P\in\PC}E_P[U(x+\theta\cdot S_T)],\quad x\in \RB.
\end{equation}
When we seek an optimal strategy, however, the class $\Theta_{bb}$ is
typically too small to admit an optimal strategy. We thus have to
enlarge the admissible class. Our choice is the following.
\begin{align}
  \label{eq:Theta_V1} \Theta_V&:=\{\theta\in L_0(S):\, \theta\cdot
  S\text{ is a $Q$-supermartingale, }\forall Q\in\MC_V\}.
\end{align}

\begin{rem}[Another equivalent formulation]
  \label{rem:EquivFormulation}
  We have defined the classes $\MC_V$ and $\Theta_V$ through the
  \emph{robust} generalized entropy $Q\mapsto V(Q|\PC)$. But the
  following equivalent formulation is sometimes useful for comparison.
  For each $P\in\PC$, we set
  \begin{align}
    \label{eq:MGmeasP}
    \MC_V(P)&:=\{Q\in\MC_{loc}:\, V( Q|P)<\infty\},\\
    \label{eq:ThetaVP1}
    \Theta_{V}(P)&:=\{\theta\in L_0(S):\, \theta\cdot S\text{ is a
      $Q$-supermartingale }\forall Q\in\MC_V(P)\}.
  \end{align}
  When a single $P\in\PC$ is fixed as the physical probability, the
  class $\Theta_V(P)$ is shown to be an appropriate domain of utility
  maximization in
  \citep{schachermayer2003:_super_martin_proper_of_optim_portf_proces}.
  Recalling (\ref{eq:RobEntropy}), our choices $\MC_V$ and $\Theta_V$
  are rewritten respectively as
  \begin{align*}
    \MC_V=\bigcup_{P\in\PC}\MC_V(P),\quad \Theta_V=\bigcap_{P\in\PC}\Theta_V(P).
  \end{align*}
  Thus our definition (\ref{eq:Theta_V1}) is consistent to what we
  wrote in introduction.

\end{rem}

Under the assumptions (\ref{eq:LocBDD}) -- (\ref{eq:ELMMV}), a duality
result (Theorem~2.3 of
\citep{owari11:_dualit_robus_utilit_maxim_unboun}) is applicable,
which states in our case that for any $\Theta$ with
$\Theta_{bb}\subset \Theta\subset \Theta_V$, we have
\begin{equation}
  \label{eq:Duality1}
  u(x)%
  =  \sup_{\theta\in\Theta}\inf_{P\in\PC}E_P[U(x+\theta\cdot
  S_T)]
  =\inf_{\lambda>0}\inf_{Q\in\MC_V}(V(\lambda Q|\PC)+\lambda x).
\end{equation}
In particular, the value function is unchanged if we replace
$\Theta_{bb}$ by the larger class $\Theta_V$.  Under the same
assumptions, the right hand side, the \emph{dual problem} of the
(\ref{eq:ProbBasic1}), admits a solution $(\hat\lambda,\Qh)\in
(0,\infty)\times\MC_V$, and the infimum $V(\hat\lambda
\Qh|\PC)=\inf_{P\in\PC}V(\hat\lambda \Qh|P)$ is attained by a
$\Ph\in\PC$ % which is unique in terms of $(\hat\lambda,\Qh)$
since $\PC$ is weakly compact, and $V(\cdot|\cdot)$ is lower
semicontinuous. Thus the right hand side of (\ref{eq:Duality1}) is
also written as $V(\hat\lambda \Qh|\Ph)$, and we call the triplet
$(\hat\lambda,\Qh,\Ph)$ a dual optimizer.

A way of proving (\ref{eq:Duality1}) and the existence of a solution
$(\hat\lambda,\Qh)$ is to closely analyze the robust utility
functional $X\mapsto \inf_{P\in\PC}E_P[U(X)]$ on $L^\infty$
characterizing $V(\cdot|\PC)$ as its conjugate. Then the duality and
the existence of $(\hat\lambda,\Qh)$ follow \emph{simultaneously} from
Fenchel's duality theorem. See
\citep{owari11:_dualit_robus_utilit_maxim_unboun} for
detail. Alternatively, one can separate the dual problem into the
minimization of $\lambda \mapsto \inf_{Q\in\MC_V}V(\lambda
Q|\PC)+\lambda x$ and of $Q\mapsto V(\lambda Q|\PC)$ for each
$\lambda$. For the latter problem, called the \emph{robust
  $f$-projection}, \citep{follmer_gundel06} proves the existence by
establishing a uniform integrability criterion in terms of
$V(\cdot|\PC)$ in the spirit of the de la Vallée-Poussin theorem.

In contrast to the standard utility maximization, neither the
uniqueness of $(\hat\lambda,\Qh)$ (hence of the triplet
$(\hat\lambda,\Qh,\Ph)$) nor the equivalence $\Qh\sim\PB$ hold in the
robust case, as the following trivial example illustrates:
\begin{ex}
  Suppose $\MC_{loc}^e\neq\emptyset$, and that $\MC_{loc}$ contains an
  element $Q_0$ which is not equivalent to $\PB$. Then we take $\PC$
  so that $Q_0\in\PC\subset\MC_{loc}$. In this case, $\hat\lambda$ is
  uniquely determined as the minimizer of $\lambda\mapsto
  V(\lambda)+\lambda x$. Then a triplet $(\hat\lambda,Q,P)$ is a dual
  optimizer if (and only if) $P=Q\in \PC\subset \MC_{loc}$. Indeed, by
  Jensen's inequality and the strict convexity of $V$, $V(\lambda
  Q|P)=E_P[V(\lambda dQ/dP)]\geq V(\lambda)$ whenever $Q\ll P$, and
  the ``equality'' holds if and only if $Q=P$. Hence
  $(\hat\lambda,\Qh,\Ph)$ is not unique, and $(\hat\lambda, Q_0,Q_0)$
  is a solution with $Q_0\not\sim\PB$.
\end{ex}

As for the equivalence, we still have $\Qh\sim\Ph$ whenever
$(\hat\lambda,\Qh,\Ph)$ is a dual optimizer (see \citep{owari11:_AME},
Theorem~2.7). Also, by an exhaustion argument, there exists a
\emph{maximal solution} $(\hat\lambda,\Qh,\Ph)$ in the sense that if
$(\lambda, Q,P)$ is another dual optimizer, then $P\ll \Ph$ (hence
$Q\ll\Qh$) and $\lambda dQ/dP=\hat\lambda d\Qh/d\Ph$, $P$-a.s., where
the density $d\Qh/d\Ph$ is defined $\PB$-a.s. in the sense of
\emph{Lebesgue decomposition}. In particular, if
$(\hat\lambda,\Qh,\Ph)$ and $(\tilde\lambda,\tilde Q,\tilde P)$ are
two maximal solution, then
\begin{equation}
  \label{eq:MaximalSol1}
  % Q\ll \Qh\quad \text{and}\quad
  \tilde  \lambda d\tilde Q/d\tilde P=\hat\lambda d\Qh/d\Ph,\text{ $\PB$-a.s.,}
\end{equation}
% In particular, $\bar Y=\hat\lambda d\Qh/d\Ph$ does not depends on each
% dual optimizer $(\hat\lambda,\Qh,\Ph)$.
See \citep[][Theorem~2.5 and Proposition~4.7]{owari11:_AME}.  This
uniqueness is still useful in our purpose.  Note finally that even
such a maximal $\Qh$ may fail to be equivalent to the reference
probability $\PB$.  See \citep[][Example~2.5]{schied_wu05} for a
counter example. In the sequel, we fix such a maximal dual optimizer,
and call $\Ph$ a \emph{least favorable measure}.

The duality (\ref{eq:Duality1}) completely characterizes the
\emph{quantitative nature} of the problem (\ref{eq:ProbBasic1}). But
our aim in this paper is to discuss the \emph{qualitative nature},
especially the existence of optimal strategy in $\Theta_V$. To do
this, assumptions (\ref{eq:LocBDD}) -- (\ref{eq:ELMMV}) are not
enough, and we assume additionally
\begin{equation}
  \label{eq:AsFinUtil}\tag{\text{A6}}
  \sup_{\theta\in\Theta_{bb}}E_P[U(\theta\cdot S_T)]<\infty,\quad \forall P\in\PC^e.
\end{equation}

\begin{rem}
  \label{rem:ConseqFinUtil}
  Several remarks on assumption (\ref{eq:AsFinUtil}) are in order.
  \begin{enumerate}
  \item This assumption is automatically satisfied if
    $U(\infty):=\sup_xU(x)<\infty$ as exponential utility, and in this
    case, $U(X)^+\in \bigcap_{P\in\PC}L^1(P)$ for \emph{any} random
    variable $X$.  Therefore, the robust utility functional $X\mapsto
    \inf_{P\in\PC}E_P[U(X)]$ is well-defined on $L^0$ as a
    $[-\infty,\infty)$-valued concave functional.
  \item If $U(\infty)=\infty$, \citep[][Th.~1.1 and
    Remark~1.2]{bellini_frittelli02} show under (\ref{eq:RAE}) that
    (\ref{eq:AsFinUtil}) is equivalent to:
    \begin{equation}
      \label{eq:FinUtilConseqEntropy}
      \forall P\in\PC^e,\, \exists Q\in\MC_V\text{ such that } V(Q|P)<\infty.
    \end{equation}

    % Under (\ref{eq:RAE}), this
    This is further equivalent to saying that $v_P(y)<\infty$ for all
    $y>0$ and $P\in\PC^e$, where $v_P$ is the dual value function
    \begin{align*}
      v_P(y):=\inf_{Q\in\MC_V}V(y Q|P),\quad y>0.
    \end{align*}
  \item We could state (\ref{eq:AsFinUtil}) with the whole $\PC$
    rather than $\PC^e$. But for our purpose, (\ref{eq:AsFinUtil}) is
    enough.  Recall that (\ref{eq:ELMMV}) implies in particular
    $\PC^e\neq \emptyset$. If $\bar P\in \PC^e$, we have $(P+\bar
    P)/2\in\PC^e$ for all $P\in\PC$, and $\|X\|_{L^1((P+\bar
      P)/2)}=(\|X\|_{L^1(P)}+\|X\|_{L^1(\bar P)})/2\geq
    \|X\|_{L^1(P)}/2$. Hence we have, for instance,
    \begin{enumerate}
    \item $\bigcap_{P\in\PC}L^1(P)=\bigcap_{P\in\PC^e}L^1(P)$;
    \item if $(X^n)$ is bounded in $L^1(P)$ for all $P\in\PC^e$, then
      the same is true for all $P\in\PC$.
    \end{enumerate}
    In particular, (\ref{eq:AsFinUtil}) (hence
    (\ref{eq:FinUtilConseqEntropy})) guarantees even in the case
    $U(\infty)=\infty$ that
    \begin{equation}
      \label{eq:ConseqFinUtil2}
      X\in\bigcap_{Q\in\MC_V}L^1(Q)\,\Rightarrow\, U(X)^+\in\bigcap_{P\in\PC}L^1(P).
    \end{equation}
    In fact, if $V(Q|P)<\infty$ and $X\in L^1(Q)$, Young's inequality
    implies $U(X)\leq V(dQ/dP)+(dQ/dP)X\in L^1(P)$, and we can take
    such a $Q\in\MC_V$ by (\ref{eq:FinUtilConseqEntropy}) for all
    $P\in\PC^e$.

  \end{enumerate}

\end{rem}

\begin{rem}[Continuation of Remark~\ref{rem:EquivFormulation}]
  \label{rem:ContinAdmissible}
  We give a brief comparison of admissible classes considered in
  literature. In \citep{owari11:_AME}, the class $\Theta_V(\Ph)$ is
  used to discuss the existence of optimal strategy, while
  \citep{follmer_gundel06} considered (implicitly) a slightly smaller
  class:
  \begin{align}
    \label{eq:ClassMGFG}
    \MC_V^0(\Qh,\Ph)&:=\{Q\in\MC_{loc}:\, V(\alpha
    Q+(1-\alpha)\Qh|\Ph)<\infty,\, \exists \alpha\in (0,1)\},\\
  \Theta_V^0(\Qh,\Ph)&:=\Biggl\{\theta\in L_0(S):%
      \begin{matrix}
        \theta\cdot S\text{ is a $Q$-supermartingale, }\\
        \forall Q\in\MC_V^0(\Qh,\Ph)
      \end{matrix}
      \Biggr\}.
  \end{align}
  Note that $\Theta_V^0(\Qh,\Ph)\subset \Theta_V(\Ph)$ since
  $\MC_V(\Ph)\subset\MC_V^0(\Qh,\Ph)$, while if we set
  $\Theta_m(\Qh):=\{\theta\in L_0(S):\theta\cdot S\text{ is a
    $\Qh$-martingale}\}$,
  \begin{align*}
    \Theta_V\cap \Theta_m(\Qh)\subset
    \Theta_V(\Ph)\cap\Theta_m(\Qh)=\Theta_V^0(\Qh,\Ph)\cap
    \Theta_m(\Qh).
  \end{align*}
  Thus $\Theta_V(\Ph)$ and $\Theta_V^0(\Qh,\Ph)$ are essentially
  equivalent for the existence of optimal strategy (see
  Theorem~\ref{thm:MainThm1}). We just emphasize here that our class
  $\Theta_V$ depends neither on particular $P\in\PC$ nor $Q\in\MC_V$,
  while $\Theta_V(\Ph)$ and $\Theta_V^0(\Qh,\Ph)$ do.
\end{rem}

We conclude this section by recalling a stability property of a set of
probability measures, called \emph{m-stability}, which will be used in
Theorem~\ref{thm:MainThm1} below.

\begin{dfn}[\citep{MR2276899}, Definition~1]
  \label{dfn:M-stable}
  A set $\QC$ of probability measures is said to be \emph{m-stable}
  (multiplicatively stable) if for any $Q\in\QC$, $Q'\in\QC^e$ with
  the density processes $Z_t=(dQ/d\PB)|_{\FC_t}$ and
  $Z'_t=(dQ'/d\PB)|_{\FC_t}$, as well as any stopping time $\tau\leq
  T$, a new probability $\bar Q$ defined by $d\bar Q/d\PB:=Z_\tau
  (Z'_T/Z'_\tau)$ is an element of $\QC$.
\end{dfn}

% The random variable $Z'_T/Z'_\tau$ is the \emph{conditional density}
% of the probability $Q'$ given $\FC_\tau$. Thus creating a new
% probability $\bar P$ by concatenating $Q$ and $Q'$ as
% $Z_\tau(Z'_T/Z'_\tau)$ corresponds to switch the probability from $Q$
% to $Q'$ at the time $\tau$.
This property is equivalent to the \emph{time-consistency} of the
corresponding \emph{dynamic coherent monetary utility function}
$\phi_\tau(X):=\essinf_{Q\in\QC}E_Q[X|\FC_\tau]$: for any $X, Y\in
L^\infty$ and stopping times $\sigma\leq \tau$, $\phi_\tau(X)\leq
\phi_\tau(Y)$ implies $\phi_\sigma(X)\leq \phi_\sigma(Y)$. This is
further equivalent (under (\ref{eq:AsInada})) to the time-consistency
of the dynamic robust utility functional
$\UC_\tau(X):=\essinf_{Q\in\QC}E_Q[U(X)|\FC_\tau]$. See
\citep[][Theorem~12]{MR2276899} for details and precise formulation.
Note that the set $\MC_{loc}$ of \emph{all} local martingale measures
is m-stable.

\section{Main Results}
\label{sec:Main}

We first state a result on a ``weak solution'' to the problem
(\ref{eq:ProbBasic1}), which yields a candidate of optimal
strategy. Let
\begin{equation}
  \label{eq:IndirectX}
 \textstyle \XC:=\left\{X\in L^0:\, X\in \bigcap_{Q\in\MC_V}L^1(Q),\, \sup_{Q\in\MC_V}E_Q[X]\leq 0\right\}.
\end{equation}
Note that $\theta\cdot S_T\in \XC$ if $\theta\in\Theta_V$, and $X\in
\XC$ implies $U(x+X)^+\in \bigcap_{P\in\PC}L^1(P)$ for any $x\in\RB$,
by (\ref{eq:AsFinUtil}) and Remark~\ref{rem:ConseqFinUtil}. Thus the
robust utility functional $X\mapsto \inf_{P\in\PC}E_P[U(x+X)]$ is
well-defined on $\XC$.  % Recall that $(\hat\lambda,\Qh,\Ph)$ is a
% maximal dual optimizer which is fixed throughout the paper.

\begin{thm}
  \label{thm:Indirect}
  Suppose (\ref{eq:LocBDD}) -- (\ref{eq:AsFinUtil}), and let $x\in\RB$
  and $(\hat\lambda,\Qh,\Ph)$ be a maximal dual optimizer. Then there
  exists an $\Xh\in\XC$ such that $U(x+\Xh)\in \bigcap_{P\in\PC}L^1(P)$
  and
  \begin{equation}
    \label{eq:OptClaim1}
      \begin{split}
        u(x)&=\sup_{X\in\XC}\inf_{P\in\PC}E_P[U(x+X)]=
        \inf_{P\in\PC}E_P[U(x+\Xh)],
      \end{split}
    \end{equation}
    where the infimum is attained by $\Ph$. Moreover, there exists an
    $(S,\Qh)$-integrable predictable process $\hat\theta$ with
    $\hat\theta_0=0$ such that $\hat\theta\cdot S$ is a
    $\Qh$-martingale (not only local) and
    \begin{equation}
      \label{eq:PredRep1}
      x+\Xh=-V'(\hat\lambda d\Qh/d\Ph)=x+\hat\theta\cdot S_T, \, \Qh\text{-a.s.}
    \end{equation}
    In particular, $\Xh$ is $\Qh$-a.s. unique, and $\hat\theta$ is
    unique in the sense that $\hat\theta\cdot S$ is unique up to
    $\Qh$-indistinguishability.
\end{thm}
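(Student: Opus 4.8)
The plan is to take for $\Xh$ the pointwise maximiser dictated by the first-order conditions of the dual problem, namely to \emph{define} $x+\Xh:=-V'(\hat\lambda\,d\Qh/d\Ph)$ with the Lebesgue density $d\Qh/d\Ph$ read off $\PB$-a.s., and then to verify a posteriori that this candidate lies in $\XC$ and realises all the claimed equalities. The easy half is the chain of inequalities. For any $X\in\XC$, Young's inequality $U(x+X)\le V(\hat\lambda\,d\Qh/d\Ph)+\hat\lambda(x+X)\,d\Qh/d\Ph$ integrated against $\Ph$ gives $\inf_{P\in\PC}E_P[U(x+X)]\le E_{\Ph}[U(x+X)]\le V(\hat\lambda\Qh|\Ph)+\hat\lambda E_{\Qh}[x+X]\le V(\hat\lambda\Qh|\Ph)+\hat\lambda x=u(x)$, where I use $\Qh\ll\Ph$, $X\in L^1(\Qh)$ and $\sup_{Q}E_Q[X]\le0$; hence $\sup_{X\in\XC}\inf_{P}E_P[U(x+X)]\le u(x)$. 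The reverse inequality is immediate since $\theta\cdot S_T\in\XC$ for $\theta\in\Theta_V$ and the duality (\ref{eq:Duality1}) already identifies $\sup_{\theta\in\Theta_V}\inf_{P}E_P[U(x+\theta\cdot S_T)]$ with $u(x)$. By construction $x+\Xh$ turns Young's inequality into an \emph{equality} $\Ph$-a.s., so that, once feasibility is known, $E_{\Ph}[U(x+\Xh)]=V(\hat\lambda\Qh|\Ph)+\hat\lambda E_{\Qh}[\Xh]$, and the attainment statements follow from $E_{\Qh}[\Xh]=0$ together with the fact that $\Ph$ attains the inner infimum.

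The real work, and the main obstacle, is to show $\Xh\in\XC$ together with $U(x+\Xh)\in\bigcap_{P}L^1(P)$. Integrability of $\Xh$ under every $Q\in\MC_V$ and of $U(x+\Xh)$ under every $P\in\PC$ I would extract from the reasonable asymptotic elasticity (\ref{eq:RAE}): this is exactly the hypothesis guaranteeing the growth estimates on $V$ and $V'$ (the ``nice'' properties invoked after (\ref{eq:RAE})) which bound $|V'(\hat\lambda\,d\Qh/d\Ph)|$ and $V(\hat\lambda\,d\Qh/d\Ph)$ by affine expressions in $d\Qh/d\Ph$ and in $V(\hat\lambda\Qh|\Ph)$, so that $x+\Xh\in L^1(Q)$ for $Q\in\MC_V$ and, via (\ref{eq:ConseqFinUtil2}) and Remark~\ref{rem:ConseqFinUtil}, $U(x+\Xh)^\pm\in\bigcap_{P}L^1(P)$. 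The delicate point is the budget constraint $\sup_{Q\in\MC_V}E_Q[\Xh]\le0$ with equality at $\Qh$. I would obtain it from the optimality of the dual triplet by a variational argument: for $Q\in\MC_V$ put $Q_\varepsilon:=(1-\varepsilon)\Qh+\varepsilon Q\in\MC_V$ (using convexity of $\MC_V$) and use that $\Qh$ minimises $Q\mapsto V(\hat\lambda Q|\PC)$, so $V(\hat\lambda\Qh|\Ph)=V(\hat\lambda\Qh|\PC)\le V(\hat\lambda Q_\varepsilon|\PC)$; differentiating the upper bound $\varepsilon\mapsto V(\hat\lambda Q_\varepsilon|\Ph)$ at $\varepsilon=0^+$ and recognising $-V'(\hat\lambda\,d\Qh/d\Ph)=x+\Xh$ yields $E_Q[x+\Xh]\le E_{\Qh}[x+\Xh]$, i.e. $E_Q[\Xh]\le E_{\Qh}[\Xh]$; the $\hat\lambda$-optimality (an envelope/derivative-in-$\lambda$ computation of $\lambda\mapsto V(\lambda\Qh|\Ph)+\lambda x$ at $\hat\lambda$) gives $E_{\Qh}[\Xh]=0$. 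I expect the genuinely subtle issues here to be (i) justifying the interchange of derivative and expectation and the one-sided differentiation, again controlled by (\ref{eq:RAE}), and (ii) accommodating those $Q\in\MC_V$ that are not absolutely continuous with respect to $\Ph$, for which the perturbation must be run against the \emph{robust} entropy $V(\cdot|\PC)$ and an optimal base measure for $Q$ rather than against $\Ph$ directly.

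With $\Xh\in\XC$ in hand, the passage to a strategy proceeds by superhedging duality. Since $\sup_{Q\in\MC_V}E_Q[\Xh]\le0$, a suitable version of the optional decomposition theorem applied to the family $\MC_V$ furnishes a predictable $\hat\theta\in\Theta_V$ and a nondecreasing process $C$ with $C_0=0$ such that $x+\Xh=x+\hat\theta\cdot S_T-C_T$ and $\hat\theta\cdot S$ is a $Q$-supermartingale for every $Q\in\MC_V$. Taking $\Qh$-expectations and using $E_{\Qh}[\Xh]=0$ forces $E_{\Qh}[C_T]=0$, whence $C_T=0$ and $x+\Xh=x+\hat\theta\cdot S_T$ $\Qh$-a.s.; moreover $\hat\theta\cdot S$ is then a $\Qh$-supermartingale with constant expectation $0$, hence a genuine $\Qh$-martingale. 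Finally, uniqueness of $\Xh$ up to $\Qh$-null sets is read off the explicit formula $x+\Xh=-V'(\hat\lambda\,d\Qh/d\Ph)$ together with the maximality relation (\ref{eq:MaximalSol1}), which pins down $\hat\lambda\,d\Qh/d\Ph$ $\PB$-a.s. (hence $\Qh$-a.s., as $\Qh\sim\Ph$); and $\hat\theta\cdot S$ is unique up to $\Qh$-indistinguishability because any two $\Qh$-martingales with the same $\Qh$-a.s. terminal value coincide.
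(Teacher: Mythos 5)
Your route is the reverse of the paper's: you \emph{define} $x+\Xh:=-V'(\hat\lambda\,d\Qh/d\Ph)$ from the dual optimizer and try to verify feasibility a posteriori, whereas the paper's proof is primal --- it builds $\Xh$ as a $\PB$-a.s.\ Komlós limit of convex combinations $\tilde k^n=\tilde\theta^n\cdot S_T$ of a maximizing sequence in $\Theta_{bb}$, inherits $\Xh\in\XC$ and the integrability of $U(\Xh)$ from $E_Q[\tilde k^n]\le 0$ and Fatou-type arguments (Lemma~\ref{lem:BCmod1}, borrowed from Biagini--\v{C}ern\'y), and only \emph{afterwards} derives the formula (\ref{eq:PredRep1}), $\Ph$-a.s., from the resulting equality chain. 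This order matters, and your route has a gap it cannot close: Theorem~\ref{thm:Indirect} does \emph{not} assume $\Qh\sim\PB$ (that hypothesis belongs to Theorem~\ref{thm:MainThm1}, and Remark~\ref{rem:WhatIf} recalls it can fail). Since $\Qh\sim\Ph$, on the possibly $\PB$-non-null set $\{d\Ph/d\PB=0\}$ the Lebesgue density $d\Qh/d\Ph$, hence your formula for $\Xh$, is arbitrary; yet both claims of the theorem depend on the values of $\Xh$ there: $\Xh\in\XC$ requires $E_Q[\Xh]\le 0$ for every $Q\in\MC_V$, in particular for $Q\in\MC_V^e$ (nonempty by (\ref{eq:ELMMV})), which charges that set, and the attainment of the infimum in (\ref{eq:OptClaim1}) at $\Ph$ requires $E_P[U(x+\Xh)]\ge E_{\Ph}[U(x+\Xh)]$ also for $P\in\PC$ charging that set. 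No perturbation argument run under $\Ph$ or $\Qh$ can see those values; the Komlós limit is exactly what supplies a correct global version. Relatedly, even on $\{d\Ph/d\PB>0\}$ your variational argument is only sketched where it is hardest: for $Q\in\MC_V$ with $V(Q|\Ph)=\infty$ one must perturb jointly in $(Q,P)$ and pass to the limit in the base measure, which is precisely the machinery of Proposition~\ref{prop:AuxVI} and Lemma~\ref{lem:ConseqRAE} that the paper develops only for Theorem~\ref{thm:MainThm1} under its stronger hypotheses; likewise $\Xh\in L^1(Q)$ for an arbitrary $Q\in\MC_V$ does not follow from growth estimates on $V'$, which relate $\Xh$ to $(\Qh,\Ph)$ but not to an unrelated $Q$ --- the paper gets it from Lemma~\ref{lem:BCmod1}(a),(c).

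The last step is also wrong as stated. You invoke an optional decomposition/superhedging theorem for the family $\MC_V$, but such theorems require the family to be stable under pasting, and the paper notes explicitly (Section~\ref{sec:UniSM}) that $\MC_V$ is \emph{not} m-stable; moreover $\Xh$ is unbounded below, outside the scope of the standard statements. A telling symptom: your conclusion $\hat\theta\in\Theta_V$ would prove Theorem~\ref{thm:MainThm1} without m-stability of $\PC$ and without $\Qh\sim\PB$, dissolving the central difficulty of the paper in one stroke. What the paper does instead is to observe that $\Xh$ and $(\hat\lambda,\Qh)$ are also optimal for the \emph{classical} problem under the single measure $\Ph$, so the known representation theorem of \citep{goll_ruschendorf01} (see also \citep{schachermayer2001:_optim_inves_in_incom_market}) yields an $(S,\Qh)$-integrable $\hat\theta$ with $\hat\theta\cdot S$ a $\Qh$-martingale and terminal value $-V'(\hat\lambda\,d\Qh/d\Ph)$; Theorem~\ref{thm:Indirect} claims nothing stronger, the supermartingale property over all of $\MC_V$ being deferred to Theorem~\ref{thm:MainThm1}. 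Your uniqueness argument and the ``easy half'' of (\ref{eq:OptClaim1}) are fine and agree with the paper's.
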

The proof is given in Section~\ref{sec:OptimClaim}. The first equality
in (\ref{eq:OptClaim1}) states that the robust utility maximization
over \emph{terminal wealths} $x+\theta\cdot S_T$ is (quantitatively)
equivalent to the \emph{indirect utility maximization} :
\begin{align*}
  u_{\MC_V}(x)=\sup_{X\in\XC}\inf_{P\in\PC}E_P[U(x+X)],
\end{align*}
while the random variable $\Xh$ is the so-called \emph{optimal
  contingent claim}. Such arguments are quite standard in (non-robust)
utility maximization, and also in the robust case,
\citep[Theorem~3.11]{follmer_gundel06} shows a similar result: under
(\ref{eq:LocBDD}) -- (\ref{eq:ELMMV}), the assertions of
Theorem~\ref{thm:Indirect} hold true \emph{except that} the sets
$\MC_V$ (in the definition (\ref{eq:IndirectX})) and $\PC$ are
replaced by $\MC_V^0(\Qh,\Ph)$ and $\PC^0(\Qh,\Ph)$ defined
respectively by Remark~\ref{rem:ContinAdmissible} and
\begin{align*}
  \PC^0(\Qh,\Ph)&:=\{P\in\PC:\, V(\Qh|\alpha
  P+(1-\alpha)\Ph)<\infty,\exists \alpha\in (0,1)\}.
\end{align*}
Note that our finite utility assumption (\ref{eq:AsFinUtil}) is
automatic if $\PC$ is replaced by $\PC^0(\Qh,\Ph)$. Also, when
$U(\infty)<\infty$, the set $\PC^0(\Qh,\Ph)$ actually coincides with
the whole set $\PC$ (\citep{follmer_gundel06}, Remark~3.10). However,
$\MC_V^0(\Qh,\Ph)$ still depends on $(\Qh,\Ph)$ which is the
\emph{solution} to the dual problem, hence not \emph{a priori}
available. On the other hand, our formulation is universal, which is a
slight, but qualitatively crucial contribution.

Theorem~\ref{thm:Indirect} suggests that the ``strategy'' $\hat\theta$
is a candidate of optimal strategy. However, we still have to prove
that this strategy is indeed \emph{admissible}.
\begin{thm}
  \label{thm:MainThm1}
  In addition to (\ref{eq:LocBDD}) -- (\ref{eq:AsFinUtil}), we assume
  that $\Qh\sim\PB$ and $\PC$ is m-stable. Then $\hat\theta$ is
  $(S,\PB)$-integrable (hence $(S,P)$-integrable for all $P\in\PC$),
  and $\hat\theta\cdot S$ is a supermartingale under all $Q\in
  \MC_V$. In particular, $\hat\theta$ belongs to $\Theta_V$ and is an
  optimal strategy.
\end{thm}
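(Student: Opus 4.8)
The plan is to derive everything from the representation $\hat\theta\cdot S_\tau=E_{\Qh}[\Xh\mid\FC_\tau]$ supplied by Theorem~\ref{thm:Indirect}, together with the defining property $\Xh\in\XC$. First I would dispose of the preliminaries. Since $\Qh\sim\PB$ and $S$ is $\PB$-locally bounded, $(S,\Qh)$-integrability of $\hat\theta$ is equivalent to $(S,\PB)$-integrability (the space $L(S)$ is invariant under an equivalent change of measure for locally bounded $S$), whence $\hat\theta\in L(S,\PB)\subset L(S,P)$ for every $P\ll\PB$, in particular for every $P\in\PC$ and every $Q\in\MC_V$, and the $Q$-integral agrees $Q$-a.s.\ with the $\PB$-version. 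Moreover $\hat\theta\cdot S$ is precisely the càdlàg $\Qh$-martingale closed by $\Xh$, so $\hat\theta\cdot S_\tau=E_{\Qh}[\Xh\mid\FC_\tau]$ ($\Qh$-, hence $\PB$- and $Q$-a.s.) for every stopping time $\tau\le T$, and $E_{\Qh}[\Xh]=0$. Optimality will then be immediate once admissibility is established, because $\hat\theta\cdot S_T=\Xh$ attains the supremum in \eqref{eq:OptClaim1} while $\{\theta\cdot S_T:\theta\in\Theta_V\}\subset\XC$.

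The heart of the matter is the dynamic coherent utility of the optimal claim,
\[
  \rho_\tau(\Xh):=\operatorname*{ess\,sup}_{Q\in\MC_V}E_{Q}[\Xh\mid\FC_\tau],\qquad \tau\le T,
\]
and the (surprisingly simple) trick is a conditional, or \emph{recursive}, optimality of $\Qh$: I claim $\hat\theta\cdot S_\tau=E_{\Qh}[\Xh\mid\FC_\tau]=\rho_\tau(\Xh)$. The inequality ``$\le$'' is trivial since $\Qh\in\MC_V$. For ``$\ge$'', suppose on the contrary that $E_{Q}[\Xh\mid\FC_\tau]>\hat\theta\cdot S_\tau$ on some $A\in\FC_\tau$ with $\Qh(A)>0$, for some $Q\in\MC_V$; replacing $Q$ by $\alpha Q+(1-\alpha)\Qh\in\MC_V^e$ (convexity of $\MC_V$ and $\Qh\sim\PB$) I may assume $Q\sim\PB$, the strict inequality on $A$ being preserved. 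Now \emph{paste} $\Qh$ before the stopping time $\tau_A:=\tau\mathbf 1_A+T\mathbf 1_{A^c}$ with $Q$ afterwards, obtaining $\tilde Q$. As $Q\in\MC_V^e$ and $\MC_{loc}$ is m-stable, $\tilde Q\in\MC_{loc}$, and one checks $\tilde Q\in\MC_V$ (see below). Conditioning at $\tau_A$ and using the martingale identity on $A^c$ gives
\[
  E_{\tilde Q}[\Xh]-E_{\Qh}[\Xh]
   =E_{\Qh}\!\bigl[\mathbf 1_{A}\bigl(E_{Q}[\Xh\mid\FC_\tau]-\hat\theta\cdot S_\tau\bigr)\bigr]>0,
\]
contradicting $\sup_{Q'\in\MC_V}E_{Q'}[\Xh]\le 0$. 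Hence $\hat\theta\cdot S=\rho_\cdot(\Xh)$.

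The supermartingale property is then read off for free. Because $\PC$ is m-stable, the set $\MC_V$ is pasting-closed, so by the equivalence between m-stability and time-consistency \citep[Theorem~12]{MR2276899} the utility $\rho$ is time-consistent: for $s\le t$,
\[
  \rho_s(\Xh)=\operatorname*{ess\,sup}_{Q\in\MC_V}E_{Q}\bigl[\rho_t(\Xh)\mid\FC_s\bigr].
\]
In particular $\rho_s(\Xh)\ge E_{Q}[\rho_t(\Xh)\mid\FC_s]$ for each \emph{single} $Q\in\MC_V$, that is, $\hat\theta\cdot S=\rho_\cdot(\Xh)$ is a $Q$-supermartingale for \emph{every} $Q\in\MC_V$, with no need to distinguish whether $V(Q\mid\Ph)$ is finite or infinite. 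Thus $\hat\theta\in\Theta_V$, and by the reduction in the first paragraph $\hat\theta$ is optimal.

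I expect the one genuine obstacle to be the assertion $\tilde Q\in\MC_V$, i.e.\ the m-stability of $\MC_V$: the local-martingale-measure property of a paste is immediate from m-stability of $\MC_{loc}$, but keeping the \emph{generalized} entropy finite is not. Here I would paste a witness $\bar P\in\PC$ out of a $P$ with $V(Q\mid P)<\infty$ and $\Ph$ (legitimate since $\Ph\sim\Qh\sim\PB$, so $\Ph\in\PC^e$, and $\PC$ is m-stable), and control
\[
  \frac{d\tilde Q}{d\bar P}=\Bigl(\tfrac{dQ}{dP}\Bigr)_{\!\tau}\,\frac{(d\Qh/d\Ph)_T}{(d\Qh/d\Ph)_\tau}
\]
by combining the two finite entropies $V(Q\mid P)$ and $V(\Qh\mid\Ph)$ through the growth estimates on $V$ furnished by reasonable asymptotic elasticity \eqref{eq:RAE}. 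The pasting identity driving the recursive optimality, by contrast, is elementary.
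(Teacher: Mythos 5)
Your overall architecture --- derive the conditional optimality $E_Q[\Xh\mid\FC_\tau]\le E_{\Qh}[\Xh\mid\FC_\tau]=\hat\theta\cdot S_\tau$ for every $Q\in\MC_V$ by a pasting/contradiction argument against $\Xh\in\XC$, then read off the supermartingale property --- is close to workable, but it contains one false claim and one inadequately justified step, and both stem from the same phenomenon. The false claim is that ``because $\PC$ is m-stable, the set $\MC_V$ is pasting-closed.'' It is not: the paper notes explicitly at the start of Section~\ref{sec:UniSM} that $\MC_{loc}$ is m-stable while $\MC_V$ \emph{is not}, and this failure is the whole difficulty of the theorem. Concretely, if you paste $Q_1$ before $\tau$ with $Q_2$ after and test against the pasted witness of $P_1$ and $P_2$ (which \emph{is} in $\PC$ by m-stability), the density is $\frac{Z^1_\tau}{D^1_\tau}\cdot\frac{Z^2_{\tau,T}}{D^2_{\tau,T}}$, and the $\FC_\tau$-measurable factor $Z^1_\tau/D^1_\tau$ is unbounded; the reasonable-asymptotic-elasticity estimate (\ref{eq:REAequiv}) controls $V(\lambda y)$ only for $\lambda$ in a compact subset of $(0,\infty)$, so finiteness of $V(Q_1|P_1)$ and $V(Q_2|P_2)$ does not propagate to the paste. (Already for relative entropy: $H(\tilde Q|\tilde P)$ contains the term $E_{Q_1}[h_\tau]$ with $h_\tau:=E_{Q_2}[\log(Z^2_{\tau,T}/D^2_{\tau,T})\mid\FC_\tau]\ge 0$, which is $Q_2$-integrable but need not be $Q_1$-integrable.) Hence the time-consistency of your $\rho_\tau$ is unjustified and the second half of the proof collapses. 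The same unboundedness defeats the fix you propose for your self-identified ``one genuine obstacle'': the $\FC_\tau$-measurable factor multiplying $(dQ/dP)_T$ in $d\tilde Q/d\bar P$ is unbounded, so no global application of the RAE growth estimate will give $V(\tilde Q|\bar P)<\infty$.

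Both gaps are repairable, and the repair is exactly the paper's localization device. For the pasting step: after mixing $Q$ with $\Qh$ and $P$ with $\Ph$ so that all four measures are equivalent to $\PB$ and $V(Q|P)<\infty$ is preserved (Lemma~\ref{lem:ConseqRAE}), set $A_n:=\{\Zh_\tau,Z_\tau,\Dh_\tau,D_\tau\in(n^{-1},n)\}\in\FC_\tau$; on $A_n$ the problematic ratio is bounded and Lemma~\ref{lem:ProofTech1} shows the pasted entropy integrand is integrable there. So paste only on $C:=A\cap A_n$, which satisfies $\Qh(C)>0$ for large $n$: keep $(\Qh,\Ph)$ outside $C$ and switch to $(Q,P)$ after $\tau$ inside $C$. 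Then $V(\tilde Q|\bar P)<\infty$, so $\tilde Q\in\MC_V$, and your computation $E_{\tilde Q}[\Xh]-E_{\Qh}[\Xh]=E_{\Qh}\bigl[1_C\bigl(E_Q[\Xh\mid\FC_\tau]-\hat\theta\cdot S_\tau\bigr)\bigr]>0$ still contradicts $\Xh\in\XC$. This repaired first step is a genuinely different and more direct route to the paper's key inequality (\ref{eq:DynVarIneq1}): the paper derives it from the joint \emph{dual} optimality of $(\Qh,\Ph)$ via a conditional Bellman inequality (Lemma~\ref{lem:DynVI1}), a one-sided differentiation in $\alpha$ (Proposition~\ref{prop:AuxVI}) and a limit $\gamma\downarrow 0$, whereas you use only the primal fact $\sup_{Q\in\MC_V}E_Q[\Xh]\le 0$. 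For the conclusion, discard time-consistency altogether: once $\hat\theta\cdot S_\tau\ge E_Q[\Xh\mid\FC_\tau]=:M^Q_\tau$ for all stopping times $\tau$, the process $\hat\theta\cdot S$ is a stochastic integral with respect to the $Q$-local martingale $S$ dominated from below by the $Q$-martingale $M^Q$, hence a $Q$-supermartingale by Strasser's theorem \citep{MR2053055}, a variant of Ansel--Stricker \citep{ansel_stricker94}; this is how the paper finishes, and it needs no stability property of $\MC_V$ whatsoever.
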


The proof is given in Section~\ref{sec:UniSM}.  When $\PC=\{\PB\}$,
the question of \emph{uniform supermartingale property} of this type
goes back to the ``six-author paper'' \citep{delbaen_et_al02} which
shows that the optimal wealth in \emph{exponential utility}
maximization is a \emph{martingale} under all local martingale
measures having a finite relative entropy with $\PB$, under an
additional assumption on reverse Hölder inequality which is later
removed by \citep{kabanov_stricker02}. Although this uniform
martingale property is no longer true for other utility
functions, 
% can not be extended beyond the exponential utility,
\citep{schachermayer2003:_super_martin_proper_of_optim_portf_proces}
shows that the optimal wealth is a supermartingale under all
$Q\in\MC_V(\PB)$, for any utility functions on $\RB$ with reasonable
asymptotic elasticity. There are also some extensions to the case
where the semimartingale $S$ is not locally bounded. See
e.g. \citep{MR2295831} and \citep{BiaginiCerny2011}.

In the robust case, the $Q$-supermartingale property for all
$Q\in\MC_V(\Ph)$ (hence all $Q\in\MC_V^0(\Qh,\Ph)$ since
$\hat\theta\cdot S$ is a $\Qh$-martingale) is shown by
\citep{follmer_gundel06} (see also \citep{owari11:_AME} for a slight
extension). We emphasize that the difference between $\MC_V(\Ph)$ and
$\MC_V$ is essential here.  Note that $\Xh$ is also optimal for the
utility maximization problem under the fixed measure $\Ph$, and the
same is true for $(\hat\lambda, \Qh)$ in the dual side. Thus the
result of
\citep{schachermayer2003:_super_martin_proper_of_optim_portf_proces}
cited in the previous paragraph is still applicable (under the
assumption $\Qh\sim\PB$) for $Q$ with $V(Q|\Ph)<\infty$, while we have
to consider the case where $V(Q|P)<\infty$ for \emph{some} $P\in\PC$
\emph{but possibly $V(Q|\Ph)=\infty$}.

To grasp the situation, we try to describe the heuristics behind the
argument in
\citep{schachermayer2003:_super_martin_proper_of_optim_portf_proces}
(from our point of view), and our idea of extending it. In what
follows in this section, we suppose all the assumptions of
Theorem~\ref{thm:MainThm1}, especially $\Qh\sim\PB$.

For a moment, we \emph{suppose} 
% Let $Q\in\MC_V$ and suppose for a moment 
that $\hat\theta\cdot S$ is a $Q$-supermartingale for some
$Q\in\MC_V$. Then the $\Qh$-martingale property and the representation
(\ref{eq:PredRep1}) imply: for any stopping time $\tau\leq T$,
\begin{equation}
  \label{eq:DVI1}
  E_{\Qh}[V'(\hat\lambda d\Qh/d\Ph)|\FC_\tau]\leq E_Q[V'(\hat\lambda d\Qh/d\Ph)|\FC_\tau],\,Q\text{-a.s.}
\end{equation}
On the other hand, Ansel-Stricker's lemma \citep{ansel_stricker94}
shows that $\hat\theta\cdot S$ is a $Q$-supermartingale if and only if
there exists a \emph{$Q$-martingale lower bound}, i.e., a
$Q$-martingale $M^Q$ such that $\hat\theta\cdot S\geq M^Q$,
$Q$-a.s. In particular, if (\ref{eq:DVI1}) holds true for any stopping
time $\tau\leq T$, the process defined by
$M^Q_\tau=-E_Q[V'(\hat\lambda d\Qh/d\Ph)|\FC_\tau]$ provides a desired
lower bound, hence (\ref{eq:DVI1}) is a necessary and
\emph{sufficient} condition for $\hat\theta\cdot S$ to be a
$Q$-supermartingale.

When $V(Q|\Ph)<\infty$, the inequality (\ref{eq:DVI1}) is obtained as
the \emph{variational inequality} which characterizes $\Qh$ as a
minimizer of the functional $Q\mapsto V(\hat\lambda Q|\Ph)$ when
$\tau=0$, and a ``Bellman-type'' principle using the m-stability of
the set of local martingale measures shows the case of general
$\tau\leq T$.

If $\inf_{P\in\PC}V(Q|P)<\infty$ but $V(Q|\Ph)=\infty$, this argument
is no longer applicable at least directly. Mathematically speaking, we
loose some important estimates to guarantee the necessary
convergences, or more intuitively, any element $Q$ with
$V(Q|\Ph)=\infty$ is in no way optimal at very early stage, and we can
not draw further information from the optimality of $\Qh$ in the
minimization of $Q\mapsto V(\hat\lambda Q|\Ph)$. However, we have used
only a part of information of $\Qh$ so far, and it is natural to
expect that a better information may improve the result.  More
specifically,
\begin{description}[\bf Step~2]
\item[\bf Step~1] the optimality of $(\Qh,\Ph)$ in the minimization of
  $(Q,P)\mapsto V(\hat\lambda Q|P)$ should yield a variational
  inequality similar to (\ref{eq:DVI1}) but with an additional term
  involving $P$:
  \begin{align*}
    \text{`` }E_{\Qh}[V'(\hat\lambda
    d\Qh/d\Ph)|\FC_\tau]+F_\tau(\Ph)\leq E_{Q}[V'(\hat\lambda
    d\Qh/d\Ph)|\FC_\tau]+F_\tau(P) \text{''}.
  \end{align*}
\item[\bf Step~2] Though we may not take $P=\Ph$ in general, it seems
  natural to expect that we may take $P$ ``arbitrarily close to
  $\Ph$'' keeping $V(Q|P)<\infty$ with fixed $Q$.
\item[\bf Step~3] If this is the case, we may expect (\ref{eq:DVI1})
  by an approximation argument:
  \begin{align*}
    \text{`` }F_\tau(P)\rightarrow F_\tau(\Ph)\text{''}.
  \end{align*}

\end{description}
The formal inequality in \textbf{Step 1} will be realized as
Proposition~\ref{prop:AuxVI} below, where the m-stability of $\PC$
will play an important role. On the other hand, \textbf{Steps~2} and
\textbf{3} will be justified in a certain sense by a simple trick
which is a consequence of reasonable asymptotic elasticity
(Lemma~\ref{lem:ConseqRAE}).

\begin{rem}[What happens when $\Qh\not\sim\PB$?]
  \label{rem:WhatIf}
  The equivalence $\Qh\sim\PB$ is automatic if all elements of $\PC$
  are equivalent to $\PB$. When the filtration $\FB$ is
  \emph{continuous} (i.e., every $(\FB,\PB)$-martingale is continuous,
  especially if it is generated by a Brownian motion), the latter
  condition is already implied by the m-stability of $\PC$ and
  (\ref{eq:PCompact}) (see \citep[][Theorem~8]{MR2276899}), thus it is
  not a further restriction in that case.

  % , and \emph{given the m-stability}, the letter assumption is not a
  % further restriction when . Indeed, if $\FB$ is
  % continuous, (\ref{eq:PCompact}) and the m-stability of $\PC$ already
  % imply that $\PC$ consists only of probabilities equivalent to $\PB$

  In general, however, the equivalence $\Qh\sim\PB$ may fail (see
  \citep[][Example~2.5]{schied_wu05} for a counter example), thus it
  is worth asking what happens in that case. % then if
  % $\Qh$ fails to be equivalent to $\PB$.
  When $U$ is finite only on the positive half-line, the optimal claim
  $\Xh$ (which does not require the assumption $\Qh\sim\PB$) is
  super-hedged by some $(S,\PB)$-integrable process $\tilde\theta$
  with $\tilde\theta\cdot S=\hat\theta\cdot S$, $\Qh$-a.s. By the
  monotonicity of robust utility functional, we see that
  $\tilde\theta$ is an optimal strategy without the additional
  assumption $\Qh\sim\PB$ (see \citep{schied_wu05} and
  \citep{schied07}). However, this argument essentially relies on the
  fact that $\Xh$ is bounded below by $-x$ (since $U(x)=-\infty$ for
  $x<0$), and no longer works when the utility function is finite on
  the entire real line. Thus we can not drop the assumption
  $\Qh\sim\PB$ (at now).

\end{rem}

\begin{rem}[Random Endowment]
  \label{rem:Endowment}
  The results of this paper may also be stated with a \emph{random
    endowment} $B$ as long as it is an $\FC_T$-measurable random
  variable satisfying
  \begin{align}
    \label{eq:Bpositive}
    \begin{split}
      &  \forall P\in\PC,\, \exists \varepsilon_P>0\text{ such that } U(-\varepsilon_PB^+)\in L^1(P),\\
      &\exists \varepsilon>0\text{ such that
      }\{U(-(1+\varepsilon)B^-)dP/d\PB\}_{P\in\PC}\text{ is uniformly
        integrable.}
    \end{split}
  \end{align}
  Then the robust utility maximization problem (\ref{eq:ProbBasic1})
  reads as
  \begin{equation}
    \label{eq:ProbEndowment}
    u_B(x):=\sup_{\theta\in\Theta_{bb}}\inf_{P\in\PC}E_P[U(x+\theta\cdot S_T+B)],
  \end{equation}
  Assumption (\ref{eq:Bpositive}) implies that $B\in
  \bigcap_{Q\in\MC_V}L^1(Q)$, and guarantees under (\ref{eq:LocBDD})
  -- (\ref{eq:ELMMV}) that a duality corresponding to
  (\ref{eq:Duality1}) holds true
  \citep[][Theorem~2.3]{owari11:_dualit_robus_utilit_maxim_unboun}:
  \begin{equation}
    \label{eq:DualityEndowment}
    % \begin{split}
    \sup_{\theta\in\Theta_V}\inf_{P\in\PC}E_P[U(x+\theta\cdot
    S_T+B)]
    =\inf_{\lambda>0}\inf_{Q\in \MC_V}(V(\lambda
    Q|\PC)+\lambda x+\lambda  E_Q[B]),
    % \end{split}
  \end{equation}
  With the same assumptions, the dual problem admits a maximal
  solution with the unique density in the sense of
  (\ref{eq:MaximalSol1}). Then Theorems~\ref{thm:Indirect}
  and~\ref{thm:MainThm1} remain true with similar proofs, and with
  obvious modifications, e.g., (\ref{eq:PredRep1}) is replaced by $x+
  \Xh+B=-V'(\hat\lambda d\Qh/d\Ph)=x+\hat\theta\cdot S_T+B$,
  $\Qh$-a.s. We omit the details. See
  \citep{owari11:_dualit_robus_utilit_maxim_unboun} for the treatment
  of random endowment and other implications of (\ref{eq:Bpositive}).
\end{rem}

\section{Optimal Claim}
\label{sec:OptimClaim}

We first note that we have only to consider the case $x=0$. Indeed,
assumptions (\ref{eq:AsInada}) and (\ref{eq:RAE}) on the utility
function are invariant under the translation of utility function from
$U$ to $U_{x}(\xi):=U(x+\xi)$, and all the results for $x\neq 0$
follow from those for $x=0$ applied to the new utility function $U_x$.
Thus we assume $x=0$ in what follows.

The next technical lemma is a collection of several arguments in
\citep{BiaginiCerny2011}.
\begin{lem}[{\citep{BiaginiCerny2011}}]
  \label{lem:BCmod1}
  Let $(Q,P)$ be a pair of probabilities with $V(Q|P)<\infty$, and
  $(k^n)_n$ a sequence of random variables such that $E_P[U(k^n)]$ is
  bounded from below and $E_Q[k^n]\leq 0$ for all $n$. Then
  \begin{description}[(a)]
  \item[(a)] $(k^n)_n$ is bounded in $L^1(Q)$;
  \item[(b)] $(U(k^n))_n$ is bounded in $L^1(P)$;
  \item[(c)] If in addition $k^n$ converges a.s. to some $k\in L^0$,
    we have $k\in L^1(Q)$, $U(k)\in L^1(P)$ and that
  \begin{equation}
    \label{eq:BClimsup}
    E_Q[k]\leq 0\text{ and }    \limsup_nE_P[U(k^n)]\leq E_P[U(k)].
  \end{equation}

  \end{description}

\end{lem}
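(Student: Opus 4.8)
The plan is to run everything through Fenchel's inequality $U(x)\le V(y)+xy$ together with two structural consequences of reasonable asymptotic elasticity (\ref{eq:RAE}): first, that $V$ is bounded below (indeed $V(y)\ge U(0)$), and second, that $V$ obeys a $\Delta_2$-type growth estimate, so that $E_P[V(\lambda Z)]<\infty$ for every $\lambda$ in a suitable range around $1$, where I write $Z:=dQ/dP$ and $K_\lambda:=E_P[V(\lambda Z)^+]<\infty$. The assumption $V(Q|P)<\infty$ forces $Z\ge 0$ with $E_P[V(Z)]<\infty$.

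I would prove (a) and (b) simultaneously, since they are coupled. From $E_Q[k^n]\le 0$ we get $E_Q[|k^n|]\le 2E_Q[(k^n)^-]$ and $E_Q[(k^n)^+]\le E_Q[(k^n)^-]$, so it suffices to bound $E_Q[(k^n)^-]$. Fenchel with $y=Z$ gives $U(k^n)^+\le V(Z)^+ + (k^n)^+Z$ pointwise, hence
\begin{equation*}
E_P[U(k^n)^+]\le K_1 + E_Q[(k^n)^+]\le K_1+E_Q[(k^n)^-].
\end{equation*}
On the other hand, Fenchel with $y=\lambda Z$ for a fixed $\lambda>1$ (admissible by the $\Delta_2$-bound) restricted to $\{k^n<0\}$ gives $\lambda Z(k^n)^-\le V(\lambda Z)-U(k^n)$ there, and integration yields
\begin{equation*}
\lambda E_Q[(k^n)^-]\le K_\lambda + E_P[U(k^n)^-]\le K_\lambda + C + E_P[U(k^n)^+],
\end{equation*}
where $-C$ is the uniform lower bound for $E_P[U(k^n)]$. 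Combining the two displays and using $\lambda>1$ to absorb $E_Q[(k^n)^-]$ from the right gives $E_Q[(k^n)^-]\le (K_1+K_\lambda+C)/(\lambda-1)$, which is (a); then the first display bounds $E_P[U(k^n)^+]$ and $E_P[U(k^n)^-]\le C+E_P[U(k^n)^+]$ bounds the negative part, giving (b). The only point needing care is $K_1<\infty$: when $U(\infty)=\infty$ one has $V(0)=\infty$, so $V(Q|P)<\infty$ already forces $Z>0$ $P$-a.s.\ (i.e.\ $Q\sim P$) and $V(Z)^+$ is genuinely integrable, while if $U(\infty)<\infty$ then $V$ is bounded on $[0,\infty)$ and there is nothing to check.

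For (c), assume $k^n\to k$ a.s. From (a) and Fatou under $Q$, $k\in L^1(Q)$. To upgrade $L^1(Q)$-boundedness of $(k^n)^-$ to $Q$-uniform integrability I invoke the superlinear bound $-U(x)\ge a_2|x|^\gamma-a_1$ with $\gamma>1$, a standard consequence of (\ref{eq:RAE}) at $-\infty$: with (b) it bounds $(k^n)^-$ in $L^\gamma(P)$, and since the dual growth of $V$ gives $Z\in L^{\gamma'}(P)$, Hölder shows $\{(k^n)^-\}$ is $Q$-uniformly integrable. Hence $E_Q[(k^n)^-]\to E_Q[k^-]$, while Fatou gives $E_Q[k^+]\le\liminf_n E_Q[(k^n)^+]$; adding, $E_Q[k]\le\liminf_n E_Q[k^n]\le 0$. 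Moreover $U(k)\in L^1(P)$ follows from Fatou applied to $U(k^n)^\pm\to U(k)^\pm$ and the $L^1(P)$-bounds of (b).

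The remaining, and most delicate, assertion is $\limsup_n E_P[U(k^n)]\le E_P[U(k)]$. Rather than dominate $U(k^n)$ directly, I use that the Fenchel gap $g^n_\lambda:=V(\lambda Z)+\lambda Zk^n-U(k^n)\ge 0$ is nonnegative, so Fatou applies to $g^n_\lambda$. Since $g^n_\lambda\to V(\lambda Z)+\lambda Zk-U(k)$ a.s.\ and $E_P[\lambda Zk^n]=\lambda E_Q[k^n]\le 0$, this yields, for every $\lambda>0$ with $E_P[V(\lambda Z)]<\infty$,
\begin{equation*}
\limsup_n E_P[U(k^n)]\le E_P[U(k)]-\lambda E_Q[k].
\end{equation*}
Because the $\Delta_2$-control on $V$ keeps $E_P[V(\lambda Z)]<\infty$ along a sequence $\lambda\downarrow 0$, letting $\lambda\to 0$ removes the slack term $-\lambda E_Q[k]\ge 0$ and delivers exactly $\limsup_n E_P[U(k^n)]\le E_P[U(k)]$. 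I expect this last step — obtaining the reverse-Fatou inequality with no residual error despite having only $E_Q[k]\le 0$ rather than $E_Q[k]=0$ — to be the main obstacle, and the device of sending the Fenchel parameter $\lambda$ to zero is precisely what resolves it.
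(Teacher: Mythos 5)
Your proof of (a) and (b) is correct and, unlike the paper's (which simply cites Proposition~6.3 of \citep{BiaginiCerny2011} for these two assertions), self-contained; and your treatment of the limsup inequality in (c) via the nonnegative Fenchel gap $g^n_\lambda$ and the limit $\lambda\downarrow 0$ is exactly the paper's argument. The genuine gap is in your proof of $E_Q[k]\le 0$. You claim that the bound $-U(x)\ge a_2|x|^\gamma-a_1$ dualizes to $Z\in L^{\gamma'}(P)$. This has the duality backwards: a lower bound on $-U$ (equivalently, an upper bound on $U$ on the negative axis) produces an \emph{upper} bound $V(y)\le C(1+y^{\gamma'})$ on the conjugate for large $y$, not a lower bound, so the hypothesis $E_P[V(Z)]<\infty$ yields no power integrability of $Z$ whatsoever. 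Exponential utility $U(x)=1-e^{-x}$ is a concrete counterexample: there $-U$ grows faster than every power (so your $\gamma$ may be taken arbitrarily large), yet $V(y)=y\log y-y+1$, and $V(Q|P)<\infty$ is merely finiteness of the relative entropy, which does not imply $dQ/dP\in L^p(P)$ for any $p>1$ (e.g.\ $dP=dt$ on $[0,1]$ and $Z(t)=c\,t^{-1/p}$ has finite entropy but $Z\notin L^p$). Consequently the $Q$-uniform integrability of $((k^n)^-)_n$, hence the convergence $E_Q[(k^n)^-]\to E_Q[k^-]$, is unjustified, and $E_Q[k]\le 0$ is not proved by this route.

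The repair is already contained in your own final display, and it is precisely how the paper argues: the inequality $\limsup_nE_P[U(k^n)]\le E_P[U(k)]-\lambda E_Q[k]$ holds for \emph{every} $\lambda>0$ (by (\ref{eq:RAE}) one has $E_P[V(\lambda Z)]<\infty$ for all $\lambda>0$), and its derivation needs only $|E_Q[k]|<\infty$ --- which you have from Fatou --- not the sign of $E_Q[k]$. Since the left-hand side is bounded below (because $E_P[U(k^n)]$ is bounded from below by hypothesis) and $E_P[U(k)]$ is finite, letting $\lambda\uparrow\infty$ forces $E_Q[k]\le 0$, while letting $\lambda\downarrow 0$ gives the limsup half of (\ref{eq:BClimsup}) as you wrote. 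So you should delete the uniform-integrability paragraph entirely and extract both halves of (\ref{eq:BClimsup}) from the single family of inequalities indexed by $\lambda$: $\lambda\downarrow 0$ for the limsup bound, $\lambda\uparrow\infty$ for $E_Q[k]\le 0$.
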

\begin{proof}
  We just fill the gap from \citep{BiaginiCerny2011}. As we are
  assuming the reasonable asymptotic elasticity (\ref{eq:RAE}),
  assertions (a) and (b) are contained in Proposition~6.3 of
  \citep{BiaginiCerny2011}. The assertion (c) also appears
  (implicitly) in the proof of their Theorem~4.10, which we briefly
  recall here.

  Assume $k^n\rightarrow k$, $P$-a.s. Since $(k^n)$ (resp. $(U(k^n)$)
  is bounded in $L^1(Q)$ (resp. $L^1(P)$), Fatou's lemma applied to
  the sequence $(|k^n|)_n$ (resp. $(|U(k^n)|)_n)$ shows that $k\in
  L^1(Q)$ (resp. $U(k)\in L^1(P)$). By Young's inequality, we have
  $U(k^n)-\lambda (dQ/dP)k^n\leq V(\lambda dQ/dP)\in L^1(P)$ for all
  $n\in\NB$ and $\lambda>0$, where the $P$-integrability of the right hand
  side for all $\lambda$ follows from the reasonable asymptotic
  elasticity. By this \emph{integrable upper bound} as well as the
  assumption $E_Q[k^n]\leq 0$, (reverse) Fatou's lemma shows that
  \begin{align*}
    \limsup_nE_P[U(k^n)]&\leq \limsup_nE_P[U(k^n)-\lambda
    (dQ/dP)k^n]\\
    &\leq E_P[U(k)-\lambda (dQ/dP)k]=E_P[U(k)]-\lambda E_Q[k], \,
    \forall \lambda>0.
  \end{align*}
  Letting $\lambda\downarrow 0$, we have (\ref{eq:BClimsup}), while
  $E_Q[k]\leq 0$ follows by letting $\lambda \uparrow \infty$.
\end{proof}

\begin{proof}[Proof of Theorem~\ref{thm:Indirect}.]
  We choose a maximizing sequence $(\theta^n)_n\subset \Theta_{bb}$,
  that is
  \begin{equation}
    \label{eq:MaxSeq1}
    \inf_{P\in\PC}E_P[U(\theta^n\cdot S_T)]\nearrow u(0).
  \end{equation}
  This sequence does not have to converge, thus we appeal to a Komlós
  type argument. Let $(\bar Q,\bar P)\in \MC_V\times\PC$ be such that
  $\bar Q\sim\bar P\sim \PB$ and $V(\bar Q|\bar P)<\infty$ which
  exists by (\ref{eq:ELMMV}). Since $E_{\bar P}[U(\theta^n\cdot
  S_T)]\geq \inf_{P\in\PC}E_P[U(\theta^1\cdot S_T)]$ and $E_{\bar
    Q}[\theta^n\cdot S_T]\leq 0$ by construction,
  Lemma~\ref{lem:BCmod1}~(a) shows that $(\theta^n\cdot S_T)_n$ is
  bounded in $L^1(\bar Q)$. Hence Komlós' theorem (see
  e.g. \citep[][Theorem~15.1.3]{delbaen_schachermayer2006:_mathem_of_arbit})
  yields another sequence $(\tilde k^n)_n$ such that
  \begin{align*}
    \begin{cases}
      \tilde k^n \in \mathrm{conv}(\theta^n\cdot S_T,\theta^{n+1}\cdot
      S_T,\cdots\,)\\
      \tilde k^n \text{ converges $\bar Q$-a.s. (hence $\PB$-a.s.)  to
        some }\Xh\in L^1(\bar Q).
    \end{cases}
  \end{align*}
  By construction, each $\tilde k^n$ is again the terminal value of a
  stochastic integral $\tilde\theta^n\cdot S_T$ where $\tilde
  \theta^n$ is the convex combination of
  $(\theta^n,\theta^{n+1},\cdots\,)$ with the same convex weights as
  $\tilde k^n$, hence $\tilde\theta^n\in \Theta_{bb}$ and $E_Q[\tilde
  k^n]\leq 0$ for each $n$ and $Q$, in particular.

  Since the robust utility functional $X\mapsto
  \inf_{P\in\PC}E_P[U(X)]$ is concave as a point-wise infimum of
  concave functionals, we have $\inf_{P\in\PC}E_P[U(\tilde k^n)]\geq
  \inf_{P\in\PC}E_P[U(\theta^n\cdot S_T)]$ for each $n$. Hence we
  still have $\lim_n\inf_{P\in\PC}E_P[U(\tilde k^n)]=u(0)$, and the
  sequence $(E_P[U(\tilde k^n)])_n$ is bounded from below for all
  $P\in\PC$.

  If $Q\in\MC_V$, there is a $P\in\PC$ with $V(Q|P)<\infty$ by the
  definition of $\MC_V$, hence another application of
  Lemma~\ref{lem:BCmod1} to the sequence $(\tilde k^n)$ with the pair
  $(Q,P)$ shows that $\Xh\in L^1(Q)$ and $E_Q[\Xh]\leq 0$. Hence
  $\Xh\in \XC$.

  We next show that $U(\Xh)\in \bigcap_{P\in\PC}L^1(P)$ and
  \begin{equation}
    \label{eq:ProofAlimsup}
    % U(\Xh)\in L^1(P)\text{ and }
    \limsup_nE_P[U(\tilde k^n)]\leq E_P[U(\Xh)],\quad \forall P\in\PC.
  \end{equation}
  This is immediate from Fatou's lemma if $U$ is bounded from
  above. When $U(\infty)=\infty$ and $P\in\PC^e$, we can take a
  $Q\in\MC_V$ with $V(Q|P)<\infty$ by (\ref{eq:FinUtilConseqEntropy}),
  hence Lemma~\ref{lem:BCmod1} shows (\ref{eq:ProofAlimsup}) and that
  $(U(\tilde k^n))_n$ is bounded in $L^1(P)$. Then
  Remark~\ref{rem:ConseqFinUtil} shows that $U(\Xh)\in L^1(P)$ and
  $(U(\tilde k^n))_n$ is still bounded in $L^1(P)$ for arbitrary
  $P\in\PC$ which need not be equivalent to $\PB$. To prove
  (\ref{eq:ProofAlimsup}) in the case $P\not\sim\PB$, we take $(\bar
  Q,\bar P)$ as above, and set $P_\alpha:=\alpha P+(1-\alpha)\bar P$
  for $\alpha\in(0,1)$. Since $P_\alpha\sim \PB$, the claim is true
  for $P_\alpha$ for all $\alpha \in (0,1)$, while we see that
  $\sup_n|E_{P_\alpha}[U(\tilde k^n)]-E_P[U(\tilde k^n)]|\leq
  2(1-\alpha) \sup_n(\|U(\tilde k^n)\|_{L^1(P)}\vee \|U(\tilde
  k^n)\|_{L^1(\bar P)})\rightarrow 0$, as $\alpha \uparrow 1$. Thus we
  deduce
  \begin{align*}
    \limsup_nE_P[U(\tilde k^n)] &=\lim_{\alpha\uparrow 1}\limsup_n
    E_{\alpha P+(1-\alpha)\bar P}[U(\tilde k^n)]\\
    &\leq \lim_{\alpha\uparrow 1}E_{\alpha P+(1-\alpha)\bar
      P}[U(\Xh)]=E_P[U(\Xh)].
  \end{align*}
  Hence (\ref{eq:ProofAlimsup}) holds for all $P\in\PC$.

  We now prove (\ref{eq:OptClaim1}). Note first that for all
  $\lambda>0$, $X\in\XC$, $Q\in\MC_V$ and $P\in\PC$,
  \begin{align*}
    E_P[U(X)]\leq V(\lambda Q|P)+\lambda E_Q[X]\leq V(\lambda Q|P).
  \end{align*}
  In particular,
  \begin{align*}
    \inf_{P\in\PC}E_P[U(X)]\leq
    \inf_{\lambda>0}\inf_{(Q,P)\in\MC_V}V(\lambda
    Q|P)\stackrel{\text{~(\ref{eq:Duality1})}}=u(0),\quad \forall
    X\in\XC,
  \end{align*}
  % where the last equality is the duality (\ref{eq:Duality1}).
  On the other hand, (\ref{eq:ProofAlimsup}) shows
  \begin{align*}
    u(0)=\lim_n\inf_{P\in\PC}E_P[U(\tilde k^n)]\leq
    \inf_{P\in\PC}\limsup_nE_P[U(\tilde k^n)]\leq \inf_{P\in\PC}E_P[U(\Xh)].
  \end{align*}
  This concludes the proof of (\ref{eq:OptClaim1}).

  We proceed to (\ref{eq:PredRep1}). Notice that
  \begin{align}\label{eq:ProofPredRepEq1}
    U(\Xh)=V(\hat\lambda d\Qh/d\Ph)+\hat\lambda (d\Qh/d\Ph)\Xh,\text{
      $\Ph$-a.s.}
  \end{align}
  Indeed, ``$\leq$'' is just a Young's inequality, while ``$\geq$''
  follows from
  \begin{align*}
    u(0)&= \inf_{P\in\PC}E_P[U(\Xh)]\leq E_{\Ph}[U(\Xh)]
    \stackrel{\text{(i)}}\leq E_{\Ph}\left[V\left(\hat\lambda
        \frac{d\Qh}{d\Ph}\right)+\hat\lambda
      \frac{d\Qh}{d\Ph}\Xh\right] \\
    &\stackrel{\text{(ii)}}\leq V(\hat\lambda
    \Qh|\Ph)\stackrel{\text{~(\ref{eq:Duality1})}}=u(0).
  \end{align*}
  Here (i) follows from the ``$\leq$'' part, and (ii) from
  $\Xh\in\XC$. % , and (iii) is nothing other than the duality
  % (\ref{eq:Duality1}), hence
  In particular, $\Ph$ attains the infimum in (\ref{eq:OptClaim1}) and
  we obtain (\ref{eq:ProofPredRepEq1}). But an elementary knowledge
  from convex analysis shows that this is possible only if
  \begin{align*}
    \Xh=-V'(\hat\lambda d\Qh/d\Ph),\text{ $\Ph$-a.s.}
  \end{align*}
  This is the first equality in (\ref{eq:PredRep1}), and the
  $\Qh$-a.s. uniqueness of $\Xh$ follows from that of $\hat\lambda
  d\Qh/d\Ph$ (see (\ref{eq:MaximalSol1})). On the other hand, the
  existence of $\hat\theta\in L(S,\Qh)$ with $\theta_0=0$ and
  $\hat\theta\cdot S$ being a $\Qh$-martingale, which represents
  $-V'(\hat\lambda d\Qh/d\Ph)$ as (\ref{eq:PredRep1}), follows from
  Theorem~3.2 of \citep{goll_ruschendorf01} (see also
  \citep[Theorem~2.2
  (iv)]{schachermayer2001:_optim_inves_in_incom_market}).  Finally,
  $\Qh$-a.s. uniqueness of the process $\hat\theta\cdot S$ follows
  from the $\Qh$-a.s. uniqueness of the terminal value
  $\hat\theta\cdot S_T$ and the fact that $\hat\theta\cdot S$ is a
  $\Qh$-martingale.
\end{proof}

\section{Uniform Supermartingale Property of Optimal Wealth}
\label{sec:UniSM}

We now proceed to the uniform supermartingale property of the optimal
wealth, that is, we shall show that $\hat\theta\cdot S$ is a
supermartingale under all local martingale measures $Q$ with finite
entropy w.r.t. some $P\in\PC$.  As outlined in Section~\ref{sec:Main},
this will follow if we can prove the dynamic variational inequality
(\ref{eq:DVI1}) for every $Q\in\MC_V$. Therefore, the key of this
section is the next proposition which should be compared with
\citep[][Lemma~3.12]{follmer_gundel06}. Recall that we have only to
consider the case $x=0$. \emph{In what follows, all the assumptions of
  Theorem~\ref{thm:MainThm1} are in force, and we do not cite them in
  each statement}.
\begin{prop}
  \label{prop:Tech} 
  We have
  \begin{enumerate}
  \item for all $Q\in\MC_V$, and for all stopping time $\tau\leq T$,
    \begin{equation}
      \label{eq:DynVarIneq1} %
      E_Q\left[ V'\left(\hat\lambda d\Qh/d\Ph\right)\Bigm|\FC_\tau
      \right]\geq E_{\Qh}\left[ V'\left(\hat\lambda
          d\Qh/d\Ph\right)\Bigm|\FC_\tau\right],\,Q\text{-a.s.}
    \end{equation}
  \item for all $P\in\PC$, and for all stopping time $\tau\leq T$,
    \begin{equation}
      \label{eq:VIcondP}
      E_P[U(\hat\theta\cdot S_T)|\FC_\tau]\geq E_{\Ph}[U(\hat\theta\cdot S_T)|\FC_\tau],\, P\text{-a.s.}
    \end{equation}

  \end{enumerate}

\end{prop}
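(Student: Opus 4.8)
The plan is to establish both variational inequalities by exploiting the optimality of the pair $(\Qh,\Ph)$ in the joint minimization of $(Q,P)\mapsto V(\hat\lambda Q|P)$, upgraded to a \emph{conditional} (dynamic) version via the m-stability of $\PC$ and of $\MC_{loc}$. The two inequalities are really two faces of the same optimality condition: (1) is the ``dual-side'' variational inequality perturbing $\Qh$ within $\MC_V$, while (2) is the ``primal-side'' inequality perturbing $\Ph$ within $\PC$. Because $\hat\theta\cdot S_T=-V'(\hat\lambda d\Qh/d\Ph)$ $\Qh$-a.s.\ by \eqref{eq:PredRep1}, the two statements are linked through the Fenchel relation, and I expect to prove them by a common perturbation scheme.

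\emph{First I would prove (1).} Fix $Q\in\MC_V$ and a stopping time $\tau\le T$. The idea is to perturb $\Qh$ in the direction of $Q$ but \emph{only after time $\tau$}, using m-stability of $\MC_{loc}$ to stay inside the admissible class. Concretely, with density processes $Z$ for $\Qh$ and $Z^Q$ for $Q$, I would form the spliced measure $Q^\varepsilon$ whose density process agrees with $Z$ up to $\tau$ and then follows a convex combination governed by $Q$ after $\tau$; since $\Qh\sim\PB$ (an assumption of Theorem~\ref{thm:MainThm1}), such splicing is legitimate and keeps $Q^\varepsilon\in\MC_{loc}$. The finiteness $V(Q^\varepsilon|\PC)<\infty$ should follow from convexity of $V$ together with $V(Q|\PC)<\infty$ and $V(\Qh|\Ph)<\infty$. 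Differentiating $\varepsilon\mapsto V(\hat\lambda Q^\varepsilon|\Ph)$ at $\varepsilon=0^+$ and using that $\Qh$ minimizes $Q\mapsto V(\hat\lambda Q|\Ph)$ over $\MC_V$ (a consequence of \eqref{eq:Duality1}) yields a one-sided derivative $\ge 0$; because $V'(y)$ is (up to sign) the optimal claim, this derivative is exactly $E_{\Ph}[\,(Z^Q_T/Z_\tau^{\text{loc}}-1)\,\hat\lambda (d\Qh/d\Ph)\,V'(\cdots)\,]$-type expression, which rearranges into the conditional inequality \eqref{eq:DynVarIneq1}. The localization in $\varepsilon$ and $\tau$ is what turns the static first-order condition into the conditional one. \textbf{The main obstacle here} is justifying differentiation under the expectation: I would need an integrable upper bound for the difference quotients, and this is precisely where reasonable asymptotic elasticity \eqref{eq:RAE} and the anticipated Lemma~\ref{lem:ConseqRAE} enter, controlling $V'(\hat\lambda d\Qh/d\Ph)$ and the perturbation uniformly in $\varepsilon$.

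\emph{Next I would prove (2) by the analogous perturbation of $\Ph$ within $\PC$.} Fix $P\in\PC$ and $\tau\le T$. Using m-stability of $\PC$, I would splice $\Ph$ and $P$ at $\tau$ to produce $P^\varepsilon\in\PC$ agreeing with $\Ph$ before $\tau$ and moving toward $P$ after. Since $(\Qh,\Ph)$ jointly minimizes $V(\hat\lambda\Qh|\cdot)$ over $\PC$, the map $\varepsilon\mapsto V(\hat\lambda\Qh|P^\varepsilon)$ has nonnegative right derivative at $0$. By the envelope/Fenchel identity \eqref{eq:ProofPredRepEq1}, namely $U(\hat\theta\cdot S_T)=V(\hat\lambda d\Qh/d\Ph)+\hat\lambda(d\Qh/d\Ph)\,\hat\theta\cdot S_T$ $\Ph$-a.s., the integrand $V(\hat\lambda d\Qh/dP^\varepsilon)$ and its $P^\varepsilon$-expectation can be re-expressed in terms of $U(\hat\theta\cdot S_T)$; differentiating and using the martingale property of $\hat\theta\cdot S$ under $\Qh$ to kill the linear term after conditioning produces exactly \eqref{eq:VIcondP}. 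Here the m-stability of $\PC$ is indispensable: without it the spliced $P^\varepsilon$ need not lie in $\PC$, and there would be no admissible direction in which to perturb. I would organize the write-up so that (1) and (2) share the splicing lemma and the same dominated-convergence estimate, invoking Lemma~\ref{lem:ConseqRAE} once for the integrable domination and \eqref{eq:AsFinUtil} to ensure the relevant $U$- and $V$-expectations are finite throughout.
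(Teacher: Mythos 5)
Your high-level philosophy (variational inequality from the optimality of $(\Qh,\Ph)$, m-stability to make it conditional, \eqref{eq:RAE} for integrability) matches the paper, but the core of your scheme for part (1) has a genuine gap: you perturb $\Qh$ toward $Q$ \emph{while holding $\Ph$ fixed} and differentiate $\varepsilon\mapsto V(\hat\lambda Q^\varepsilon|\Ph)$ at $\varepsilon=0^+$. That is exactly the classical argument (Schachermayer, F\"ollmer--Gundel) which works when $V(Q|\Ph)<\infty$, and it fails in precisely the case this proposition exists to handle: $Q\in\MC_V$ with $V(Q|\PC)<\infty$ but $V(Q|\Ph)=\infty$. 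The obstruction is not the differentiation-under-the-integral technicality you flag; it is that the function you want to differentiate is identically $+\infty$. For exponential utility, where $V(\cdot|\cdot)$ is relative entropy, the decomposition
\begin{equation*}
V(Q^\varepsilon|\Ph)=\varepsilon V(Q|\Ph)+(1-\varepsilon)V(\Qh|\Ph)
-\varepsilon V(Q|Q^\varepsilon)-(1-\varepsilon)V(\Qh|Q^\varepsilon),
\qquad Q^\varepsilon:=\varepsilon Q+(1-\varepsilon)\Qh,
\end{equation*}
together with $dQ/dQ^\varepsilon\le 1/\varepsilon$ and $d\Qh/dQ^\varepsilon\le 1/(1-\varepsilon)$, gives $V(Q^\varepsilon|\Ph)\ge \varepsilon V(Q|\Ph)+(1-\varepsilon)V(\Qh|\Ph)-\log 2=\infty$ for \emph{every} $\varepsilon>0$. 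So no first-order condition can be extracted; your observation that $V(Q^\varepsilon|\PC)<\infty$ is true but irrelevant, since the quantity you differentiate involves $\Ph$, not $\PC$. The same finiteness problem infects your part (2) when $U(\infty)=\infty$: for an arbitrary $P\in\PC$ one may have $V(\hat\lambda\Qh|P)=\infty$, and then $V(\hat\lambda\Qh|P^\varepsilon)$ need not be finite for any $\varepsilon>0$ either.

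The paper's resolution --- the actual content of the proposition, advertised as ``Steps 2--3'' in Section~\ref{sec:Main} --- is a \emph{joint two-parameter} perturbation. Given $Q\in\MC_V^e$, pick $P\in\PC^e$ with $V(Q|P)<\infty$ (possible by the definition of $\MC_V$, resp.\ by \eqref{eq:FinUtilConseqEntropy} when starting from $P$), and set $Q_\alpha:=\alpha Q+(1-\alpha)\Qh$, $P_\gamma:=\gamma P+(1-\gamma)\Ph$ with \emph{independent} $\alpha,\gamma\in(0,1)$. Lemma~\ref{lem:ConseqRAE}, a consequence of \eqref{eq:RAE}, guarantees $V(Q_\alpha|P_\gamma)<\infty$ for all such $\alpha,\gamma$; note this uses the cross terms $V(\tfrac{\alpha}{\gamma}Q|P)$ and $V(\tfrac{1-\alpha}{1-\gamma}\Qh|\Ph)$, i.e., finiteness of each pair \emph{matched to its own reference measure}. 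The conditional optimality is then obtained not by differentiating a path but by an event-splicing contradiction (Lemma~\ref{lem:DynVI1}: if the conditional inequality failed on an $\FC_\tau$-set of positive measure, m-stability of $\MC_{loc}$ and $\PC$ would produce a pair strictly better than $(\Qh,\Ph)$), and one-sided differentiation in the single parameter $\alpha$ (with $Z^\alpha$ and $D^\alpha$ coupled) yields the \emph{joint} variational inequality of Proposition~\ref{prop:AuxVI}, in which a $Q$-term and a $P$-term appear summed. Finally \eqref{eq:DynVarIneq1} and \eqref{eq:VIcondP} are decoupled by letting $\gamma\downarrow 0$ with $\alpha$ fixed (resp.\ $\alpha\downarrow 0$ with $\gamma$ fixed), which annihilates the Bayes weight of the other term. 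Without this device, your argument only reproves the known statements for $Q\in\MC_V(\Ph)$ and $P\in\PC^0(\Qh,\Ph)$, which is exactly what the paper says is insufficient.
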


We introduce some notations.  If $L$ is a strictly positive
martingale, we denote $L_{\tau,T}:=L_T/L_\tau$, for any stopping time
$\tau\leq T$. Recall that any probability $Q\ll\PB$ is identified with
a (uniformly integrable) martingale, namely the \emph{density process}
$Z^Q_\cdot =(dQ/d\PB)|_{\FC_\cdot}$.  In what follows, we denote by
$\Zh$ (resp. $\Dh$) the density process of $\Qh$ (resp. $\Ph$). Also,
when a pair $(Q,P)\in\MC_{loc}\times\PC$ is fixed, the density process
of $Q$ (resp. $P$) is denoted by $Z$ (resp.  $D$), and set:
\begin{equation}
  \label{eq:ConditionalConvexComb}
  Z^\alpha_{\tau,T}:=\alpha Z_{\tau,T}+(1-\alpha)\Zh_{\tau,T}, 
  \quad D^\alpha_{\tau,T}:=\alpha D_{\tau,T}+(1-\alpha)\Dh_{\tau,T},\quad \alpha \in[0,1].
\end{equation}

We make a couple of simple reductions. The first one is just a
notational reduction. In our purpose, we can assume without loss of
generality that $\hat\lambda=1$ \emph{since we already know
  $\hat\lambda$}. Indeed, $(\hat\lambda\Qh,\Ph)$ minimizes
$(\nu,P)\mapsto V(\nu|P)$ if and only if $(\Qh,\Ph)$ minimizes
$(\nu,P)\mapsto
V_{\hat\lambda}(\nu|P):=\frac{1}{\hat\lambda}V(\hat\lambda \nu|P)$.
Next, we have only to prove (\ref{eq:DynVarIneq1}) and
(\ref{eq:VIcondP}) for all $Q\in \MC_V^e$ and $P\in\PC^e$,
respectively. Indeed, if we could show (\ref{eq:DynVarIneq1}) for all
$Q'\in\MC_V^e$ for instance, we have $\bar Q:=(Q+\Qh)/2\in \MC_V^e$
for any $Q\in\MC_V$ on the one hand, and on the other hand, Bayes'
formula implies
\begin{align*}
  E_{\Qh}[\Phi|\FC_\tau]&\leq E_{\bar Q}[\Phi|\FC_\tau]\\
  &=\frac{Z_\tau}{Z_\tau+\Zh_\tau}E_Q[\Phi|\FC_\tau]+\frac{\Zh_\tau}{Z_\tau+\Zh_\tau}E_{\Qh}[\Phi|\FC_\tau]\,
  \text{a.s. on }\{Z_\tau>0\}
\end{align*}
where $\Phi=V'(d\Qh/d\Ph)$, hence (\ref{eq:DynVarIneq1}).  A similar
argument applies also to (\ref{eq:VIcondP}).

The first step is to show a ``Bellman-type'' principle for a
\emph{time-consistent} optimization. Note that the set $\MC_{loc}$ of
\emph{all} local martingale measures is m-stable, while $\MC_V$ is
not. The next simple lemma allows us to avoid this difficulty.
\begin{lem}
  \label{lem:ProofTech1}
  Let $(Q,P)\in \MC_V^e\times\PC^e$ with $V(Q|P)<\infty$, and $(Z,D)$
  the corresponding density processes as well as $\alpha\in
  [0.1]$. Then for any stopping time $\tau\leq T$, the random variable
  $\Dh_\tau D^\alpha_{\tau,T}V\left(\frac{\Zh_\tau
      Z^\alpha_{\tau,T}}{\Dh_\tau
      D^\alpha_{\tau,T}}\right)$ is $\FC_\tau$-locally
  integrable i.e., there exists an increasing sequence
  $A_n\in\FC_\tau$ such that
  \begin{align}\label{eq:LocInteg}
    \PB(A_n)\nearrow 1\quad\text{and}\quad 1_{A_n}\Dh_\tau
    D^\alpha_{\tau,T}V\left(\frac{\Zh_\tau
        Z^\alpha_{\tau,T}}{\Dh_\tau
        D^\alpha_{\tau,T}}\right) \in L^1,\,\forall n.
  \end{align}

\end{lem}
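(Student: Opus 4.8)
The claim is a local-integrability statement: for the fixed stopping time $\tau$, I need to produce an increasing sequence $A_n\in\FC_\tau$ with $\PB(A_n)\nearrow 1$ on which the random variable
$$
W_\tau:=\Dh_\tau D^\alpha_{\tau,T}\,V\!\left(\frac{\Zh_\tau Z^\alpha_{\tau,T}}{\Dh_\tau D^\alpha_{\tau,T}}\right)
$$
becomes genuinely integrable. The point is that $W_\tau$ itself need not lie in $L^1$ — that is precisely why we want to truncate on an $\FC_\tau$-measurable set — but the global finite-entropy hypothesis $V(Q|P)<\infty$ should force integrability once the "bad" part of $\Omega$, where the $\FC_\tau$-measurable prefactors $\Dh_\tau$, $\Zh_\tau$ blow up or vanish, is cut away.

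Let me think carefully about what controls $W_\tau$.

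**Rewriting in terms of global densities.** First I would unwind the definitions. Writing everything back in terms of the terminal densities, recall $\Zh_{\tau,T}=\Zh_T/\Zh_\tau$ and similarly for the others, and $Z^\alpha_{\tau,T}=\alpha Z_{\tau,T}+(1-\alpha)\Zh_{\tau,T}$. So
$$
\Dh_\tau D^\alpha_{\tau,T}=\alpha \Dh_\tau \frac{D_T}{D_\tau}+(1-\alpha)\frac{\Dh_T}{\Dh_\tau}\Dh_\tau=\alpha\frac{\Dh_\tau}{D_\tau}D_T+(1-\alpha)\Dh_T,
$$
and likewise
$$
\Zh_\tau Z^\alpha_{\tau,T}=\alpha\frac{\Zh_\tau}{Z_\tau}Z_T+(1-\alpha)\Zh_T.
$$
Thus $W_\tau$ is exactly of the form $\mu_T V(\xi_T/\mu_T)$ where $\mu_T=\alpha a\,D_T+(1-\alpha)\Dh_T$ and $\xi_T=\alpha b\,Z_T+(1-\alpha)\Zh_T$, with $\FC_\tau$-measurable coefficients $a=\Dh_\tau/D_\tau$ and $b=\Zh_\tau/Z_\tau$ (well-defined and strictly positive $\PB$-a.s. since $Q\in\MC_V^e$, $P\in\PC^e$ give $Z,D,\Zh,\Dh$ strictly positive). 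This reduction is the essential computation; once I have it the rest is an estimate.

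**The integrability estimate.** The plan is to bound $W_\tau$ by a sum of four globally integrable terms, multiplied by bounded-on-$A_n$ prefactors. The two natural comparison measures are the \emph{constant-coefficient} versions. Convexity of $V$ and the perspective function $\mu\mapsto\mu V(\xi/\mu)$ give, for $\alpha\in[0,1]$,
$$
\mu_T V\!\left(\frac{\xi_T}{\mu_T}\right)\le \alpha\,aD_T\,V\!\left(\frac{bZ_T}{aD_T}\right)+(1-\alpha)\,\Dh_T\,V\!\left(\frac{\Zh_T}{\Dh_T}\right),
$$
since the perspective of a convex function is jointly convex in $(\xi,\mu)$. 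The second term is $(1-\alpha)\Dh_T V(\Zh_T/\Dh_T)$ whose $\PB$-expectation is $(1-\alpha)V(\Qh|\Ph)=(1-\alpha)V(\hat\lambda\Qh|\Ph)/\hat\lambda<\infty$ (finite by the duality, and recall we have normalized $\hat\lambda=1$), so this term is already in $L^1$. For the first term, on the set $A_n:=\{n^{-1}\le a\le n,\ n^{-1}\le b\le n\}\in\FC_\tau$ the coefficients $a,b$ are bounded above and below, so $\mathbf 1_{A_n}aD_T V(bZ_T/aD_T)$ is controlled, using monotonicity/convexity of $V$ together with (A4) (reasonable asymptotic elasticity gives the two-sided bound $V(cy)\le C_c(1+V(y))$ type control), by a constant multiple of $1+D_T V(Z_T/D_T)=1+D_T V(dQ/dP)$, whose expectation is $1+V(Q|P)<\infty$. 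Since $a,b$ are finite and strictly positive $\PB$-a.s., $\PB(A_n)\nearrow 1$.

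**Main obstacle.** The delicate point is the estimate on the first term: bounding $V(bZ_T/aD_T)$ by $V(Z_T/D_T)$ up to constants and an additive $L^1$ term, uniformly in the truncation level $n$. The argument of $V$ has been multiplied by the factor $b/a\in[n^{-2},n^2]$, so I cannot simply compare pointwise; I must invoke reasonable asymptotic elasticity (A4), which is exactly the hypothesis ensuring $V$ does not grow "too fast" under dilation of its argument — concretely, that for each fixed constant $c>0$ there is $C_c$ with $V(cy)\le C_c(V(y)^+ +1)$ and a matching lower control, so that on $A_n$ (where $c=b/a$ ranges in a \emph{bounded} interval depending on $n$) one gets $\mathbf 1_{A_n}\,aD_T\,V(bZ_T/aD_T)\le C_n\,(D_T V(Z_T/D_T)^+ +D_T)$. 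Both $D_T V(dQ/dP)^+\le D_T|V(dQ/dP)|$ and $D_T$ integrate ($V(Q|P)<\infty$ and $\PB(\Omega)=1$ with $D_T\in L^1$), giving (\ref{eq:LocInteg}). I would lean on the properties of $V$ recorded in \citep{owari2011:APFM} and \citep{frittelli_rosazza04} for the precise elasticity estimates rather than rederiving them.
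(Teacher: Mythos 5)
Your proof is correct and follows essentially the same route as the paper's: split off the $(1-\alpha)$-part via convexity of the perspective function $(\xi,\mu)\mapsto \mu V(\xi/\mu)$ (whose expectation is controlled by $V(\Qh|\Ph)<\infty$), truncate on $\FC_\tau$-sets where the $\FC_\tau$-measurable density ratios are bounded away from $0$ and $\infty$, and invoke the Frittelli--Rosazza Gianin equivalent form of reasonable asymptotic elasticity to absorb the bounded dilation factor, ending with $V(Q|P)<\infty$. The only slip is your stated elasticity bound $V(cy)\le C_c(V(y)^+ +1)$: the correct form is $V(\lambda y)\le C_a V(y)+C_a'(y+1)$, so your final majorant must also include a $Z_T$ summand --- harmless, since $Z_T\in L^1$.
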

\begin{proof}
  Since $\Dh_\tau D_{\tau,T}^\alpha V\left(\frac{\Zh_\tau
      Z_{\tau,T}^\alpha}{\Dh_\tau D^\alpha_{\tau,T}}\right)\leq \alpha
  \Dh_\tau D_{\tau,T} V\left(\frac{\Zh_\tau Z_{\tau,T}}{\Dh_\tau
      D_{\tau,T}}\right)+(1-\alpha)
  \Dh_TV\left(\frac{\Zh_T}{\Dh_T}\right)$ (see the proof of
  Lemma~\ref{lem:ConseqRAE} below), and the second term is integrable,
  it suffices to prove the case $\alpha=1$.

  Recall from \citep{frittelli_rosazza04} that the condition
  (\ref{eq:RAE}) of reasonable asymptotic elasticity is equivalent to:
  for any $a\geq 1$, there exists $C_a,C_a'>0$ such that
  \begin{equation}
    \label{eq:REAequiv}
    V(\lambda y)\leq C_a V(y)+C_a'(y+1),\quad\forall \lambda\in[a^{-1},a],\,\forall y>0.
  \end{equation}
  Since $V$ is bounded from below by $U(0)$, we can choose the
  constant $C_a'$ so that the right hand side is always positive.  For
  the sequence $A_n$, we take
  \begin{align*}
    A_n:=\{\Zh_\tau,Z_\tau,\Dh_\tau,D_\tau \in(n^{-1},n)\}\in\FC_\tau,\quad\forall n.
  \end{align*}
  Noting that $\varphi:=\Dh_\tau D_{\tau,T}V\left(\frac{\Zh_\tau
      Z_{\tau,T}}{\Dh_\tau D_{\tau,T}}\right)=\frac{\Dh_\tau}{D_\tau}
  D_TV\left(\frac{\Zh_\tau D_\tau}{\Dh_\tau
      Z_\tau}\frac{Z_T}{D_T}\right)$, (\ref{eq:REAequiv}) implies that
  \begin{align*}
    \varphi\leq n^2 C_{n^4}D_TV(Z_T/D_T)+n^2 C_{n^4}'(Z_T+D_T) \text{
      a.s.  on }A_n.
  \end{align*}
  Thus $1_{A_n}\varphi\in L^1$ for each $n$. Finally,
  $\PB(A_n)\nearrow1$ since $\Qh\sim\Ph\sim Q\sim P\sim\PB$ by
  assumption.
\end{proof}

\begin{lem}
  \label{lem:DynVI1}
  For any $(Q,P)\simeq (Z,D)\in\MC^e_V\times\PC^e$ with
  $V(Q|P)<\infty$, $\alpha \in [0,1]$,
  \begin{equation}
    \label{eq:DynamicDual1}
    \begin{split}
      & E\left[\Zh_TV\left( \frac{\Zh_T}{\Dh_T}\right)
        \Bigm|\FC_\tau\right] \leq E\left[\Dh_\tau D^\alpha_{\tau,T}
        V\left( \frac{\Zh_\tau
            Z^\alpha_{\tau,T}}{\Dh_{\tau}D^\alpha_{\tau,T}}\right)
        \Bigm|\FC_\tau\right]\,\text{a.s.}
    \end{split}
  \end{equation}

\end{lem}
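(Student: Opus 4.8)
The plan is to read the right-hand side of \eqref{eq:DynamicDual1} as the $\FC_\tau$-conditional generalized entropy of a suitable \emph{competitor} to $(\Qh,\Ph)$, and then to derive the inequality from the (static) joint optimality of $(\Qh,\Ph)$ by a Bellman-type pasting argument resting on the m-stability of $\MC_{loc}$ and of $\PC$. Since we have normalized $\hat\lambda=1$, the pair $(\Qh,\Ph)$ minimizes $(Q,P)\mapsto V(Q|P)=E[D_TV(Z_T/D_T)]$ over $\MC_V\times\PC$, so the optimal entropy density is $\Dh_TV(\Zh_T/\Dh_T)$ and its conditional expectation $E[\Dh_TV(\Zh_T/\Dh_T)\mid\FC_\tau]$ is the left member of \eqref{eq:DynamicDual1}. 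The substance of the proof is to manufacture, for each datum $(Q,P)$ and each $\alpha$, a genuine element of $\MC_V\times\PC$ that coincides with $(\Qh,\Ph)$ up to $\tau$ and whose entropy can be localized to $\FC_\tau$-measurable sets.

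First I would build the competitor by pasting the post-$\tau$ behaviour of $(Q,P)$ onto $(\Qh,\Ph)$ and then taking the conditional convex combination. Since $Q\in\MC_V^e$ and $P\in\PC^e$ are equivalent to $\PB$, m-stability of $\MC_{loc}$ (which contains $\MC_V$) produces the measure with density $\Zh_\tau Z_{\tau,T}$, and m-stability of $\PC$ the one with density $\Dh_\tau D_{\tau,T}$; convexity of $\MC_{loc}$ and of $\PC$ then yields $\bar Q_\alpha\in\MC_{loc}$ and $\bar P_\alpha\in\PC$ with densities $\Zh_\tau Z^\alpha_{\tau,T}$ and $\Dh_\tau D^\alpha_{\tau,T}$ respectively, recalling \eqref{eq:ConditionalConvexComb}. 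By construction $(\bar Q_\alpha,\bar P_\alpha)$ agrees with $(\Qh,\Ph)$ on $\FC_\tau$, and its entropy density is exactly the integrand $\Dh_\tau D^\alpha_{\tau,T}V(\Zh_\tau Z^\alpha_{\tau,T}/(\Dh_\tau D^\alpha_{\tau,T}))$ on the right of \eqref{eq:DynamicDual1}.

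The remaining and \textbf{main} difficulty is that $\bar Q_\alpha$ need not lie in $\MC_V$: its entropy against $\bar P_\alpha$ may be infinite, in which case the comparison $V(\Qh|\Ph)\le V(\bar Q_\alpha|\bar P_\alpha)$ is vacuous. This is precisely what Lemma~\ref{lem:ProofTech1} repairs. Fixing the sets $A_n\in\FC_\tau$ from \eqref{eq:LocInteg}, I would, for any $A\in\FC_\tau$ with $A\subset A_n$, paste only on $A$, i.e. replace $\tau$ by the stopping time equal to $\tau$ on $A$ and to $T$ off $A$, so that the competitor coincides with $(\Qh,\Ph)$ after $\tau$ on $A^c$; the local integrability then forces its entropy to be finite, whence $\bar Q_\alpha\in\MC_V$ and the optimality comparison becomes informative. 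Cancelling the common contribution on $A^c$ (legitimate since $V(\Qh|\Ph)<\infty$ and $V\ge U(0)$) leaves $E[1_A\Dh_TV(\Zh_T/\Dh_T)]\le E[1_A\Dh_\tau D^\alpha_{\tau,T}V(\Zh_\tau Z^\alpha_{\tau,T}/(\Dh_\tau D^\alpha_{\tau,T}))]$. As $A\subset A_n$ ranges over $\FC_\tau$ and $\PB(A_n)\nearrow 1$, letting $n\to\infty$ upgrades this to the asserted conditional inequality $\PB$-a.s. The points needing care are that the restricted pasting stays in $\MC_{loc}$ and in $\PC$ (immediate from m-stability applied with the modified stopping time, using $A\in\FC_\tau$), and that the assumed equivalence $\Qh\sim\PB$ together with $Q\sim P\sim\PB$ is what guarantees $\PB(A_n)\nearrow 1$.
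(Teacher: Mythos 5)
Your proof is correct and takes essentially the same route as the paper's: both paste the conditional convex combination onto $(\Qh,\Ph)$ after $\tau$ on an $\FC_\tau$-measurable set contained in a localization set $A_n$ from Lemma~\ref{lem:ProofTech1}, so that the competitor (which lies in $\MC_{loc}\times\PC$ by m-stability and convexity) has finite entropy and hence lies in $\MC_V\times\PC$, and then invoke the optimality of $(\Qh,\Ph)$. The only difference is presentational: the paper argues by contradiction, pasting on the set where \eqref{eq:DynamicDual1} fails intersected with $A_n$, whereas you paste on arbitrary $A\in\FC_\tau$ with $A\subset A_n$ and recover the conditional inequality from the integrated ones before letting $n\to\infty$ --- a trivially equivalent formulation.
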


\begin{proof}
  Note first that the conditional expectation of the right hand side
  is well-defined and a.s. finite by Lemma~\ref{lem:ProofTech1}.  Let
  $C'$ be the set on which the inequality (\ref{eq:DynamicDual1})
  fails, which is $\FC_\tau$-measurable.  Then we suppose by way of
  contradiction that $\PB(C')>0$.

  Take a sequence $(A_n)\subset\FC_\tau$ as in
  Lemma~\ref{lem:ProofTech1} and a large $n$ so that $\PB(C'\cap
  A_n)>0$. Setting $C:=C'\cap A_n$, we define a new pair $(\bar Q,\bar
  P)\simeq (\bar Z,\bar D)$ by
  \begin{align*}
    \bar Z_T=1_{C^c}\Zh_T+1_C\Zh_\tau Z^\alpha_{\tau,T}\text{ and
    }\bar D_T=1_{C^c}\Dh_T+1_C\Dh_\tau D^\alpha_{\tau,T}.
  \end{align*}
  First, $(\bar Q, \bar P)\in\MC_{loc}\times\PC$ by the m-stability of
  $\MC_{loc}$ and $\PC$. Also, since
  \begin{align*}
    \bar D_TV\left(\frac{\bar Z_T}{\bar D_T}\right)
    =1_{C^c}\Dh_TV\left(\frac{\Zh_T}{\Dh_T}\right) +1_C\Dh_\tau
    D^\alpha_{\tau,T} V\left(\frac{\Zh_\tau
        Z^\alpha_{\tau,T}}{\Dh_\tau D^\alpha_{\tau,T}}\right),
  \end{align*}
  we have $V(\bar Q|\bar P)<\infty$ by the construction of $C$ and
  Lemma~\ref{lem:ProofTech1}, hence $\bar Q\in \MC_V$. Finally,
  \begin{align*}
    V(\bar Q|\bar P)&
    =E\left[\bar D_TV\left(\frac{\bar Z_T}{\bar D_T}\right)\right]\\
    =&E\Biggl[1_{C^c}E\left[\Dh_TV\left(\frac{\Zh_T}{\Dh_T}\right)\Bigm|\FC_\tau\right]
    +1_C E\left[\Dh_\tau D^\alpha_{\tau,T} V\left(\frac{\Zh_\tau
          Z^\alpha_{\tau,T}}{\Dh_\tau
          D^\alpha_{\tau,T}}\right)\Bigm|\FC_\tau\right]\Biggr]\\
    <&V(\Qh|\Ph).
  \end{align*}
  This contradict to the optimality of $(\Qh,\Ph)$. 
\end{proof}

Now the formal inequality in \textbf{Step~1} at the end of
Section~\ref{sec:Main} is realized as follows.
\begin{prop}
  \label{prop:AuxVI}
  For any $(Q,P)\simeq (Z,D)\in \MC_V^e\times\PC^e$ with
  $V(Q|P)<\infty$,
  \begin{equation}
    \label{eq:AUXVI1}
    \begin{split}
      \Zh_\tau&\left\{E_Q\left[V'\left(d\Qh/d\Ph\right)\Bigm|\FC_\tau\right]-E_{\Qh}\left[V'\left(d\Qh/d\Ph\right)\Bigm|\FC_\tau\right]\right\}\\
      &\quad +
      \Dh_\tau\left\{E_P[U(\Xh)|\FC_\tau]-E_{\Ph}[U(\Xh)|\FC_\tau]\right\}\geq
      0, \text{ a.s.}
\end{split}
\end{equation}

\end{prop}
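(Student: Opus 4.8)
The plan is to derive the pointwise inequality \eqref{eq:AUXVI1} from the conditional optimality estimate \eqref{eq:DynamicDual1} of Lemma~\ref{lem:DynVI1} by a variational (first-order) argument in the parameter $\alpha$. Fix $(Q,P)\simeq(Z,D)\in\MC_V^e\times\PC^e$ with $V(Q|P)<\infty$, fix a stopping time $\tau\le T$, and define, for $\alpha\in[0,1]$,
\begin{equation}
  \label{eq:ProposalPhi}
  \Phi(\alpha):=E\!\left[\Dh_\tau D^\alpha_{\tau,T}\,V\!\left(\frac{\Zh_\tau Z^\alpha_{\tau,T}}{\Dh_\tau D^\alpha_{\tau,T}}\right)\Bigm|\FC_\tau\right].
\end{equation}
By Lemma~\ref{lem:DynVI1}, $\Phi(\alpha)\ge \Phi(0)$ a.s.\ for every $\alpha\in[0,1]$, since the left-hand side of \eqref{eq:DynamicDual1} equals $\Phi(0)=E[\Zh_T V(\Zh_T/\Dh_T)\mid\FC_\tau]\cdot(\Dh_\tau/\Dh_\tau)$ (using $\Zh_\tau Z^0_{\tau,T}=\Zh_T$ and $\Dh_\tau D^0_{\tau,T}=\Dh_T$). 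Thus $\alpha=0$ is a minimizer of $\Phi$ on $[0,1]$, and the desired inequality should emerge as the statement that the right derivative $\Phi'(0^+)\ge 0$.

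First I would compute $\Phi'(0^+)$ formally. Writing $g(z,d):=dV(z/d)$, one has $\partial_z g=V'(z/d)$ and $\partial_d g=V(z/d)-(z/d)V'(z/d)=V^*(\,\cdot\,)$-type terms; differentiating the integrand $\Dh_\tau D^\alpha_{\tau,T}V(\Zh_\tau Z^\alpha_{\tau,T}/(\Dh_\tau D^\alpha_{\tau,T}))$ in $\alpha$ at $\alpha=0$, where $\tfrac{d}{d\alpha}Z^\alpha_{\tau,T}=Z_{\tau,T}-\Zh_{\tau,T}$ and similarly for $D^\alpha_{\tau,T}$, the chain rule yields an integrand of the form
\begin{equation}
  \label{eq:ProposalDeriv}
  \Zh_\tau(Z_{\tau,T}-\Zh_{\tau,T})\,V'\!\left(\frac{\Zh_T}{\Dh_T}\right)
  +\Dh_\tau(D_{\tau,T}-\Dh_{\tau,T})\left[V\!\left(\frac{\Zh_T}{\Dh_T}\right)-\frac{\Zh_T}{\Dh_T}V'\!\left(\frac{\Zh_T}{\Dh_T}\right)\right].
\end{equation}
Taking the $\FC_\tau$-conditional expectation and using the martingale relations $E[Z_{\tau,T}\mid\FC_\tau]=E[\Zh_{\tau,T}\mid\FC_\tau]=1$ (density processes normalized at $\tau$) together with the representation $\Xh=-V'(\Zh_T/\Dh_T)$ from \eqref{eq:PredRep1} (recall $\hat\lambda=1$), the terms should reorganize exactly into the two bracketed differences of \eqref{eq:AUXVI1}: the $V'$ terms give $\Zh_\tau\{E_Q[V'(d\Qh/d\Ph)\mid\FC_\tau]-E_{\Qh}[V'(d\Qh/d\Ph)\mid\FC_\tau]\}$ after converting $\PB$-conditional expectations of $Z_{\tau,T}V'$ into $Q$-conditional expectations via Bayes' rule, and the remaining terms give the $U(\Xh)$ differences, since $U(\Xh)=V(\Zh_T/\Dh_T)+(\Zh_T/\Dh_T)\Xh=V(\Zh_T/\Dh_T)-(\Zh_T/\Dh_T)V'(\Zh_T/\Dh_T)$ by \eqref{eq:ProofPredRepEq1}.

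The main obstacle will be rigorously justifying the differentiation under the conditional expectation in \eqref{eq:ProposalPhi}, i.e.\ interchanging $\tfrac{d}{d\alpha}$ with $E[\,\cdot\mid\FC_\tau]$ and passing to the one-sided limit $\alpha\downarrow 0$. The difference quotient $(\Phi(\alpha)-\Phi(0))/\alpha$ must be controlled by an $\FC_\tau$-integrable majorant uniformly in small $\alpha$; here the convexity of $V$ is the right tool, since it makes $\alpha\mapsto \Dh_\tau D^\alpha_{\tau,T}V(\Zh_\tau Z^\alpha_{\tau,T}/(\Dh_\tau D^\alpha_{\tau,T}))$ convex, so the difference quotient is monotone decreasing as $\alpha\downarrow 0$ and bounded above by its value at any fixed $\alpha_0$, which is conditionally integrable by the convexity bound used in Lemma~\ref{lem:ProofTech1}. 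Monotone (or dominated) convergence for conditional expectations then secures $\Phi'(0^+)=E[\,(\text{integrand }\eqref{eq:ProposalDeriv})\mid\FC_\tau]$, provided one also checks that the negative part of the derivative integrand is conditionally integrable — which again follows from the local-integrability estimate \eqref{eq:LocInteg} after localizing along the sets $A_n\in\FC_\tau$ and removing the localization at the end. I would carry out the argument first on each $A_n$ where everything is genuinely integrable, obtain $\Phi'(0^+)\ge 0$ there, and conclude \eqref{eq:AUXVI1} a.s.\ by letting $n\to\infty$.
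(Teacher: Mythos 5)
Your proposal is correct and takes essentially the same route as the paper's proof: the paper likewise exploits the a.s.\ convexity of $\alpha\mapsto \Dh_\tau D^\alpha_{\tau,T}V\bigl(\Zh_\tau Z^\alpha_{\tau,T}/\Dh_\tau D^\alpha_{\tau,T}\bigr)$ so that the difference quotients decrease as $\alpha\searrow 0$, identifies the pointwise limit with the same two-term expression via $U(\Xh)=V(d\Qh/d\Ph)-(d\Qh/d\Ph)V'(d\Qh/d\Ph)$, invokes Lemma~\ref{lem:DynVI1} for nonnegativity of the conditional expectations of the quotients, and passes to the limit by the generalized conditional monotone convergence theorem using the $\FC_\tau$-local integrability of Lemma~\ref{lem:ProofTech1}, concluding by Bayes' formula. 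The only cosmetic difference is that the paper differentiates the integrand pointwise and then takes conditional expectations, rather than phrasing the argument as $\Phi'(0^+)\geq 0$ for the conditional expectation itself.
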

\begin{proof}
  Let $(Z,D)$, $\tau$, $\alpha$ be as above, and set
\begin{align*}
  G_\tau(\alpha):=\Dh_\tau D_{\tau,T}^\alpha V( \Zh_\tau
  Z^\alpha_{\tau,T}/\Dh_\tau D^\alpha_{\tau,T}).
\end{align*}
Then $\alpha \mapsto G_\tau(\alpha)$ is convex (a.s.) by (the proof
of) Lemma~\ref{lem:ConseqRAE} below, hence
$(G_\tau(\alpha)-G(0))/\alpha$ decreases a.s. to the limit
$\Xi_\tau(Q,P)$ as $\alpha\searrow 0$. Here $\Xi_\tau(Q,P)$ is
explicitly computed as:
\begin{align*}
  \Xi_\tau(Q,P)&= \Zh_\tau V'\left( \frac{d\Qh}{d\Ph}\right)
  (Z_{\tau,T}-\Zh_{\tau,T}) +\Dh_\tau U(\Xh)
  (D_{\tau,T}-\Dh_{\tau,T}),
\end{align*}
using $\Zh_T/\Dh_T=d\Qh/d\Ph$ and $U(\Xh)=V(
d\Qh/d\Ph)-(d\Qh/d\Ph)V'(d\Qh/d\Ph)$. Since $G_\tau(1)$ is
$\FC_\tau$-locally integrable and
$E[(G_\tau(\alpha)-G_\tau(0))/\alpha|\FC_\tau]\geq 0$ a.s. by
Lemma~\ref{lem:DynVI1}, the (generalized) conditional monotone
convergence theorem shows that $E[\Xi(Q,P)|\FC_\tau]\geq 0$.  Noting
that $V'(d\Qh/d\Ph)=-\Xh\in L^1(Q)$ and $U(\Xh)\in L^1(P)$ by
Theorem~\ref{thm:Indirect}, we deduce (\ref{eq:AUXVI1}) from Bayes'
formula.
\end{proof}

We proceed to \textbf{Step~2}. Fixing $Q\in\MC_V$, we want to take $P$
``arbitrarily close'' to $\Ph$. The next simple lemma gives a precise
form of this argument. % For any pair $(Q,P)\in\MC_V\times\PC$, we
% denote

\begin{lem}
  \label{lem:ConseqRAE} 
  Let $(Q,P)$ and $(Q',P')$ be any two pairs of probability measures
  absolutely continuous w.r.t. $\PB$. Then for any $\alpha,\gamma\in
  (0,1)$, we have
  \begin{equation}
    \label{eq:RAEconseqEntropy}
    \begin{split}
      V(\alpha Q+&(1-\alpha)Q'|\gamma P+(1-\gamma)P')\\
      &\leq \gamma V\left(\frac\alpha\gamma
        Q\Bigm|P\right)+(1-\gamma)V\left(\frac{1-\alpha}{1-\gamma}Q'\Bigm|P'\right).
    \end{split}
  \end{equation}
  In particular, $V(Q|P)<\infty$ and $V(Q'|P')<\infty$ imply $V(\alpha
  Q+(1-\alpha)Q'|\gamma P+(1-\gamma)Q')<\infty$ for any
  $\alpha,\gamma\in (0,1)$.
\end{lem}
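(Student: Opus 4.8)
The plan is to recognize both sides as integrals against the reference measure $\PB$ of the \emph{perspective function} of $V$, and then to reduce the claimed inequality to the pointwise subadditivity of that perspective function. Concretely, I would first pass to densities with respect to $\PB$: writing $q=dQ/d\PB$, $q'=dQ'/d\PB$, $p=dP/d\PB$, $p'=dP'/d\PB$, and setting
\[
g(z,w):=wV(z/w)\ (w>0),\qquad g(z,0):=+\infty\ (z>0),\qquad g(0,0):=0,
\]
the definition (\ref{eq:IntroDiv1}) of the generalized entropy becomes $V(\nu|P)=E_\PB[g(d\nu/d\PB,\,p)]$, the convention at $w=0$ encoding exactly the absolute-continuity clause. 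Since $V$ is convex and lower semicontinuous (being a conjugate), $g$ is positively homogeneous of degree one, i.e. $g(\lambda z,\lambda w)=\lambda g(z,w)$ for $\lambda>0$.

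The key rewriting uses this homogeneity to absorb the mismatched weights $\alpha$ and $\gamma$. I would compute
\[
\gamma V\!\left(\tfrac{\alpha}{\gamma}Q\,\middle|\,P\right)=E_\PB\!\left[\gamma\,p\,V\!\left(\tfrac{\alpha q}{\gamma p}\right)\right]=E_\PB[g(\alpha q,\gamma p)],
\]
and likewise $(1-\gamma)V(\tfrac{1-\alpha}{1-\gamma}Q'|P')=E_\PB[g((1-\alpha)q',(1-\gamma)p')]$, while the left-hand side is plainly $E_\PB[g(\alpha q+(1-\alpha)q',\,\gamma p+(1-\gamma)p')]$. Writing $u=(\alpha q,\gamma p)$ and $v=((1-\alpha)q',(1-\gamma)p')$, the whole inequality collapses to the pointwise statement $g(u+v)\le g(u)+g(v)$, integrated against $\PB$.

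It then remains to prove that $g$ is subadditive. For $w_1,w_2>0$ this is immediate from the convexity of $V$: since $\frac{z_1+z_2}{w_1+w_2}=\frac{w_1}{w_1+w_2}\cdot\frac{z_1}{w_1}+\frac{w_2}{w_1+w_2}\cdot\frac{z_2}{w_2}$, convexity gives
\[
(w_1+w_2)V\!\left(\tfrac{z_1+z_2}{w_1+w_2}\right)\le w_1V\!\left(\tfrac{z_1}{w_1}\right)+w_2V\!\left(\tfrac{z_2}{w_2}\right),
\]
which is exactly $g(u+v)\le g(u)+g(v)$; the degenerate cases $w_1=0$ or $w_2=0$ are handled directly by the convention above (if the relevant numerator is positive the right-hand side is already $+\infty$, otherwise the inequality is trivial). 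Taking $E_\PB$ yields (\ref{eq:RAEconseqEntropy}).

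For the ``in particular'' assertion I would invoke reasonable asymptotic elasticity in the form (\ref{eq:REAequiv}): with $\lambda=\alpha/\gamma$ and $a:=\max(\alpha/\gamma,\gamma/\alpha)\ge1$ one gets $V(\tfrac{\alpha}{\gamma}\tfrac{dQ}{dP})\le C_aV(\tfrac{dQ}{dP})+C_a'(\tfrac{dQ}{dP}+1)$ pointwise; integrating against $P$ (using $Q(\Omega)=1$ and $Q\ll P$, which holds since $V(Q|P)<\infty$) gives $V(\tfrac{\alpha}{\gamma}Q|P)\le C_aV(Q|P)+2C_a'<\infty$, and similarly for the primed pair, so both terms on the right of (\ref{eq:RAEconseqEntropy}) are finite. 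The only genuinely delicate point is bookkeeping the $+\infty$ values---matching the $w=0$ convention for $g$ with the $\nu\not\ll P$ branch of the entropy and checking the degenerate cases---but this is routine once the perspective-function viewpoint is adopted; the substantive content is the one-line convexity estimate above.
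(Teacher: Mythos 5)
Your proof is correct and is essentially the paper's own argument: your pointwise subadditivity $g(u+v)\le g(u)+g(v)$ of the perspective function, proved via the decomposition $\frac{z_1+z_2}{w_1+w_2}=\frac{w_1}{w_1+w_2}\frac{z_1}{w_1}+\frac{w_2}{w_1+w_2}\frac{z_2}{w_2}$ and convexity of $V$, is exactly the paper's displayed inequality for positive numbers $x,x',y,y'$, integrated against $\PB$. Your treatment of the ``in particular'' part also matches the paper, which invokes the same consequence of reasonable asymptotic elasticity ($V(Q|P)<\infty\Rightarrow V(\lambda Q|P)<\infty$); you merely make it explicit via the bound (\ref{eq:REAequiv}).
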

\begin{proof}
  Note that for any positive numbers $x,x',y,y'$,
  \begin{align*}
    \frac{\alpha x+(1-\alpha)x'}{\gamma y+(1-\gamma)y'}&=\frac{\gamma
      y}{\gamma y+(1-\gamma)y'}\frac{\alpha}{\gamma}\frac{x}{y}
    +\frac{(1-\gamma)y'}{\gamma
      y+(1-\gamma)y'}\frac{1-\alpha}{1-\gamma}\frac{x'}{y'}.
  \end{align*}
  Thus the convexity of $V$ shows that
  \begin{align*}
    (\gamma y+(1-\gamma)y')&V\left(\frac{\alpha x+(1-\alpha)x'}{\gamma
        y+(1-\gamma)y'}\right)\\
    & \leq\gamma yV\left(\frac\alpha\gamma
      \frac{x}{y}\right)+(1-\gamma)y'
    V\left(\frac{1-\alpha}{1-\gamma}\frac{x'}{y'}\right).
  \end{align*}
  Putting $dQ/d\PB$ (resp. $dQ'/d\PB$, $dP/d\PB$, $dP'/d\PB$) into $x$
  (resp. $x'$, $y$, $y'$), and taking the $\PB$-expectation, this
  implies (\ref{eq:RAEconseqEntropy}).  The second claim follows from
  the fact that $V(Q|P)<\infty$ $\Rightarrow$ $V(\lambda Q|P)<\infty$
  for any $\lambda>0$, as a consequence of reasonable asymptotic
  elasticity.
\end{proof}

\begin{proof}[Proof of Proposition~\ref{prop:Tech}]
  As noted after the statement of Proposition~\ref{prop:Tech}, we have
  only to consider the case $(Q,P)\in\MC_V^e\times\PC^e$ with
  $V(Q|P)<\infty$. Fixing such a pair $(Q,P)$, we put
  $Q_\alpha:=\alpha Q+(1-\alpha)\Qh$ and $P_\gamma:=\gamma
  P+(1-\gamma)\Ph$ for any $\alpha, \gamma \in (0,1)$.
% \begin{equation}
%   \label{eq:Combination}
%   Q_\alpha:=\alpha Q+(1-\alpha)\Qh,\quad P_\gamma:=\gamma P+(1-\gamma)\Ph,\quad \alpha, \gamma \in (0,1).
% \end{equation}
By Lemma~\ref{lem:ConseqRAE}, the auxiliary variational inequality
(\ref{eq:AUXVI1}) is valid for any $(Q_\alpha,P_\gamma)$ with
arbitrary $\alpha,\gamma\in (0,1)$.  Noting that
$E_{Q_\alpha}[\Phi|\FC_\tau]-E_{\Qh}[\Phi|\FC_\tau]=\frac{\alpha
  Z_\tau}{\alpha
  Z_\tau+(1-\alpha)\Zh_\tau}\{E_Q[\Phi|\FC_\tau]-E_{\Qh}[\Phi|\FC_\tau]\}$
etc, we have
\begin{align*}
  &\Zh_\tau \frac{\alpha Z_\tau}{\alpha Z_\tau+(1-\alpha)\Zh_\tau}\{E_Q[V'(d\Qh/d\Ph)|\FC_\tau]-E_{\Qh}[V'(d\Qh/d\Ph)|\FC_\tau]\} \\
  &\qquad +\Dh_\tau\frac{\gamma D_\tau}{\gamma
    D_\tau+(1-\gamma)\Dh_\tau}
  \{E_P[U(\Xh)|\FC_\tau]-E_{\Ph}[U(\Xh)|\FC_\tau]\}\\
&\qquad\qquad \geq 0,\text{a.s. }\forall \alpha,\gamma \in (0,1).
\end{align*}
Since $\gamma D_\tau/(\gamma D_\tau
+(1-\gamma)\Dh_\tau)\stackrel{\gamma\downarrow 0}\rightarrow 0$ and
$\alpha Z_\tau/(\alpha Z_\tau
+(1-\alpha)\Zh_\tau)\stackrel{\alpha\downarrow 0}\rightarrow 0$, we
deduce (\ref{eq:DynVarIneq1}) and (\ref{eq:VIcondP}) by letting
$\gamma\downarrow 0$ (resp. $\alpha\downarrow 0$) with $\alpha$
(resp. $\gamma$) being fixed, whenever $V(Q|P)<\infty$. Finally, any
$Q\in\MC_V^e$ (resp. $P\in\PC^e$) admits a $P\in\PC$
(resp. $Q\in\MC_V$) with $V(Q|P)<\infty$ by definition (resp. by
Remark~\ref{rem:ConseqFinUtil}). 
\end{proof}

\begin{proof}[Proof of Theorem~\ref{thm:MainThm1}]
  Under the assumption $\Qh\sim\PB$, the $(S,\PB)$-integrability of
  $\hat\theta$ is clear.  We verify that $\hat\theta\cdot S$ is a
  supermartingale under each $Q\in\MC_V$. Since $V'(d\Qh/d\Ph)\in
  L^1(Q)$, the process defined by
  $M_\tau^Q=-E_Q[V'(d\Qh/d\Ph)|\FC_\tau]$ is a $Q$-martingale.  Then
  (\ref{eq:PredRep1}), (\ref{eq:DynVarIneq1}) as well as the fact that
  $\hat\theta\cdot S$ is a $\Qh$-martingale show that
  \begin{align*}
    \hat\theta\cdot S_\tau= -E_{\Qh}[V'(d\Qh/d\Ph)|\FC_\tau]\geq M^Q_\tau,\,Q\text{-a.s. }
  \end{align*}
  for any stopping time $\tau\leq T$. A stochastic integral w.r.t. a
  $Q$-local martingale dominated below by a $Q$-(uniformly integrable)
  martingale is a $Q$-supermartingale by
  \citep[][Theorem~1]{MR2053055}, which is a variant of
  Ansel-Stricker's lemma
  \citep[Proposition~3.3]{ansel_stricker94}.
\end{proof}

\section*{Acknowledgements}
Part of this research was carried out during the author's visit to ETH
Zürich. He warmly thanks Prof.  Martin Schweizer for a number of
valuable suggestions and his hospitality. The author gratefully
acknowledges the financial support from The Norinchukin Bank and the
Global COE program ``The research and training center for new
development in mathematics''.

% \bibliographystyle{SVnatOwari5}
% \bibliography{main,finance}

\end{document}